\numberwithin{equation}{section}
\newtheorem{theorem}{Theorem}
\newtheorem{proposition}{Proposition}
\newtheorem{lemma}{Lemma}
\theoremstyle{definition}
\newtheorem{definition}{Definition}
\newtheorem{assumption}{Assumption}
\DeclareMathOperator{\E}{\text{E}}
\def\ind{\perp\!\!\!\perp}
\begin{document}

\title{Constructive Identification of Heterogeneous Elasticities in the
Cobb-Douglas Production Function}
\author{%
\begin{tabular}{ccc}
Tong Li\thanks{Tong Li. Department of Economics, Vanderbilt University, VU Station B \#351819, 2301 Vanderbilt Place, Nashville, TN 37235-1819. Email: tong.li@vanderbilt.edu. Phone: (615) 322-3582} & ${}$\qquad${}$ & Yuya Sasaki\thanks{Yuya Sasaki. Department of Economics, Vanderbilt University, VU Station B \#351819, 2301 Vanderbilt Place, Nashville, TN 37235-1819. Email: yuya.sasaki@vanderbilt.edu. Phone: (615) 343-3016.} 
\end{tabular}
}
\maketitle

\begin{abstract}\setlength{\baselineskip}{6mm}
This paper presents the identification of heterogeneous elasticities in the Cobb-Douglas production function. 
The identification is constructive with closed-form formulas for the elasticity with respect to each input for each firm. 
We propose that the flexible input cost ratio plays the role of a control function under ``non-collinear heterogeneity'' between elasticities with respect to two flexible inputs. 
The \textit{ex ante} flexible input cost share can be used to identify the elasticities with respect to flexible inputs for each firm. 
The elasticities with respect to labor and capital can be subsequently identified for each firm under the timing assumption admitting the functional independence. 
\bigskip\newline
\textbf{Keywords:} Cobb-Douglas production function, heterogeneous
elasticity, identification.
\end{abstract}


\section{Introduction}

\label{sec:introduction}

Heterogeneous output elasticities and non-neutral productivity are natural
features of production technologies. In addition, the heterogeneity and
non-neutrality are related to a number of empirical questions in development
economics, economic growth, industrial organization, and international
trade. The econometrics literature is relatively sparse about methods of
identifying production functions allowing for these empirically relevant
technological features. A couple of innovations have been made relatively
recently. By extending the approach of \cite{DoJa13}, \cite{DoJa15} propose
an empirical strategy to analyze constant elasticity of substitution (CES)
production function with labor augmenting productivity, which allows for
multi-dimensional heterogeneity and non-neutral productivity. By extending
the method of \cite{GaNaRi17}, \cite{KaScSu15} propose an empirical strategy
to analyze Cobb-Douglas production function with finitely supported
heterogeneous output elasticities.

We study identification of the Cobb-Douglas production function model with
infinitely supported heterogeneous coefficients indexed by unobserved latent
technologies. Our objective is to identify the vector of both non-additive
and additive parts of the productivity for each firm, where the non-additive
part consists of the output elasticity with respect to each input and the
additive part is the traditional neutral productivity. We provide
constructive identification with closed-form identifying formulas for each
of the heterogeneous output elasticities and the additive productivity for
each firm. Our constructive identification with closed-form formulas
provides a transparent argument in relation to potential identification
failures due to subtle yet critical issues, such as the functional
dependence problem pointed out by \cite{AcCaFr15} and the instrument
irrelevance problem pointed by \cite{GaNaRi17}.


\section{Relation to the Literature}

\label{sec:literature}

One of the challenges in empirical analysis of production functions is to
overcome the simultaneity in the choice of input quantities by rational
firms, which biases na\"{\i}ve estimates \citep{MaAn44}. The literature on
identification of production functions has a long history \citep[see
e.g.,][]{GrMa98,AcBeBePa07}, and remarkable progresses have been made over
the past two decades. Various ideas proposed in this literature facilitate
the identification result that we develop in this paper. Furthermore, this
literature has discovered some subtle yet critical sources of potential
identification failure, which we need to carefully take into account when we
construct our identification results. As such, it is useful to discuss in
detail the relations between our identification strategy and the principal
ideas developed by this literature.

A family of approaches widely used in practice today to identify parameters
of the Cobb-Douglas production function are based on control functions. \cite%
{OlPa96} propose to use the inverse of the reduced-form investment choice
function as a control function for latent technology. \cite{LePe03} proposes
to use the inverse of the reduced-form flexible input choice function as a
control function for the latent technology. See also \cite{Wo09} for
estimation of the relevant models. The main advantage of these
identification strategies is that an econometrician can be agnostic on the
form of the control function other than the requirement for the
invertibility of the function. Like the control function literature, we
employ a control function for latent technologies. However, unlike \cite%
{OlPa96} or \cite{LePe03}, we do not directly assume an invertible mapping
between an observed choice by firm and unobserved technology. Instead, we
only assume for construction of a control function that the ratio of
heterogeneous elasticities with respect to two flexible inputs are not
globally collinear -- see Assumption \ref{a:technology} ahead and
discussions thereafter. In other words, our approach requires
\textquotedblleft non-collinear heterogeneity\textquotedblright\ between
elasticities withe respect to two flexible inputs, in place of the
traditional assumption of invertible mapping.

\cite{AcCaFr15} point out the so-called functional dependence problem in the
control function approaches of \cite{OlPa96} and \cite{LePe03}, and propose
a few alternative structural assumptions to circumvent this problem. The
functional dependence problem refers to the rank deficiency for identifying
labor elasticity that arises because labor input that depends on the current
state variables loses data variations once the the state variables are fixed
through the control function. Among alternative structural assumptions to
avoid this problem, \cite{AcCaFr15} suggest a timing assumption where labor
input is determined slightly before the current state realizes -- also see \cite{AcHa15}. This
structural assumption is empirically supported by \cite{HuHuSa17}. The
structural assumptions (Assumptions \ref{a:flexible_input}--\ref%
{a:independence}) invoked in the present paper are consistent with the
timing assumption suggested by \cite{AcCaFr15}, and relevant data generating
processes can allow for the functional independence by a similar argument to
those of \cite{AcCaFr15} and \cite{KaScSu15}.

\cite{GaNaRi17} point out another source of identification failure in the
approach of using the flexible input choice function as a control function.
Namely, the Markovian model of state evolution which is commonly assumed in
this literature certainly induces orthogonality restrictions, but it also
nullifies the instrumental power or instrumental relevance for
identification of flexible input elasticities. Noting the role of
instruments from viewpoint of simultaneous equations, \cite{DoJa13,DoJa15}
suggest to use lagged input prices as alternative instruments assumed to
satisfy both the instrument independence and instrument relevance, and thus
solve this problem. In fact, for the Cobb-Douglas production functions, it
is long known that the first-order conditions and the implied input cost or
revenue shares inform us of input elasticities \citep{So57}. This approach
has been more recently revisited by \cite{Va03}, \cite{DoJa13,DoJa15}, \cite%
{KaScSu15}, \cite{GaNaRi17}, and \cite{GrLiZh16} in and beyond the context
of the Cobb-Douglas functions. The present paper also takes a similar
approach. We argue that the ratio of flexible input costs in conjunction
with the aforementioned assumption of non-collinear heterogeneity constructs
a control variable for the latent technology. Furthermore, the
\textquotedblleft \textit{ex ante} input cost share\textquotedblright\
defined as the share of input cost relative to the conditional expectation
of output value from firm's point of view is effective for constructive
identification of the heterogeneous elasticities with respect to flexible
inputs. This \textit{ex ante} input cost share is also directly identifiable
from data by econometricians once we construct the control variable from the
flexible input costs. The idea of using input cost share to identify
flexible input elasticity was hinted in \cite{GaNaRi17}, and we further
devise a way to extend this idea to models with non-additive productivity.
The model considered by \cite{GaNaRi17} and the model considered in the
present paper are complementary, in that the former is nonparametric with
additive productivity while the latter is linear with non-additive
productivity.

Productivity is sometimes treated as incidental parameters in panel data
analysis, but the literature on production functions has often circumvented
the incidental parameters problem \citep[cf.][]{NeSc48} via inversions of
maps representing choice rules of rational firms %
\citep[e.g.,][]{OlPa96,LePe03}. As already mentioned, we also circumvent
this problem via a control function based on the assumption of non-collinear
heterogeneity. Nonetheless, the existing methods to identify production
functions still utilize panel data to form orthogonality restrictions based
on the first difference in productivity %
\citep[e.g.,][]{OlPa96,LePe03,Wo09,AcCaFr15}. On the other hand, we do not
form such orthogonality restrictions based on panel data, as we can
explicitly identify the output elasticities with respect to flexible inputs
via the aforementioned \textit{ex ante} input cost shares. This aspect of
our approach is similar to that of \cite{GaNaRi17}.

While the literature on identification of production functions often
considers the CES productions functions (including the Cobb-Douglas and
translog approximation cases) with additive latent technologies, a departure from
Hicks-neutral productivity allows for answering many important economic
questions as emphasized in the introduction. \cite{DoJa15} extend the
identification strategy of \cite{DoJa13} to the framework of CES production
function with labor-augmenting technologies. The present paper shares
similar motivations to that of the preceding work by \cite{DoJa15}, but in
different and complementary directions. The labor-augmented CES production
function in the Cobb-Douglas limit case entails neutral productivity, and
hence the present paper focusing on non-neutral productivity in the
Cobb-Douglas production function attempts to complement the CES framework of 
\cite{DoJa15}. Cobb-Douglas production functions with non-additive
heterogeneity are studied in \cite{KaScSu15} and \cite{BaBrSa16}. \cite%
{KaScSu15} treat heterogeneous productivity via a mixture of the models of 
\cite{GaNaRi17}, and propose to identify the mixture components. On the
other hand, our framework allows for infinitely supported coefficients and
our method constructs identifying formulas for each coefficient for each
firm. For an application to international trade, \cite{BaBrSa16} consider
infinitely supported coefficients in the Cobb-Douglas production function,
where they almost directly assume identification for the moment restrictions
in a similar manner to \cite{AcCaFr15}, based on a multi-dimensional
invertibility assumption for the reduced-form flexible input choice. On the
other hand, the present paper develops the identification strategy instead
of assuming the identification, and complements \cite{BaBrSa16} by formally
establishing the identification result for a closely related model. We take
advantage of the first-order conditions, instead of relying on the
invertibility assumption, for the purpose of unambiguous identification of
output elasticities with respect to flexible inputs as emphasized earlier.


\section{The Model and Notations}

\label{sec:model}

Consider the gross-output production function in logarithm: 
\begin{equation}
y_{t}=\Psi \left( l_{t},k_{t},m_{t}^{1},m_{t}^{2},\omega _{t}\right) +\eta
_{t}\qquad \E[\eta_t]=0,  \label{eq:production_function}
\end{equation}%
where $y_{t}$ is the logarithm of output produced, $l_{t}$ is the logarithm
of labor input, $k_{t}$ is the logarithm of capital, $m_{t}^{1}$ is the
logarithm of a flexible input such as materials, $m_{t}^{2}$ is the
logarithm of another flexible input such electricity, $\omega _{t}$ is an
index of latent technology, and $\eta _{t}$ is an idiosyncratic productivity
shock. The Cobb-Douglas production function takes the form 
\begin{equation}
\Psi \left( l_{t},k_{t},m_{t}^{1},m_{t}^{2},\omega _{t}\right) =\beta
_{l}(\omega _{t})l_{t}+\beta _{k}(\omega _{t})k_{t}+\beta _{m^{1}}(\omega
_{t})m_{t}^{1}+\beta _{m^{2}}(\omega _{t})m_{t}^{2}+\beta _{0}(\omega _{t})
\label{eq:short_hand}
\end{equation}%
with heterogeneous coefficients $(\beta _{l}(\omega _{t}),\beta _{k}(\omega
_{t}),\beta _{m^{1}}(\omega _{t}),\beta _{m^{2}}(\omega _{t}),\beta
_{0}(\omega _{t}))$ that are indexed by the latent technology $\omega _{t}$.
The latent technology $\omega _{t}$ affects the additive productivity $\beta
_{0}(\omega _{t})$ and the elasticities $\beta _{l}(\omega _{t})$, $\beta
_{k}(\omega _{t})$, $\beta _{m^{1}}(\omega _{t})$, and $\beta
_{m^{2}}(\omega _{t})$ in non-parametric and non-linear ways.
Econometricians may not know the functional forms of $\beta _{l}(\cdot )$, $%
\beta _{k}(\cdot )$, $\beta _{m^{1}}(\cdot )$, $\beta _{m^{2}}(\cdot )$, or $%
\beta _{0}(\cdot )$.

Let $p_{t}^{y}$ denote the unit price of the output faced by firm $j$ at
time period $t$. Let $p_{t}^{m^{1}}$ and $p_{t}^{m^{2}}$ denote the unit
prices of the two types ($k=1$ and $2$) of the flexible input, respectively.
(We remark that these prices need not be observed in data for our
identification argument. We only require to observe the output value, $%
p_{t}^{y}\exp (y_{t})$, and input costs, $p_{t}^{m^{1}}\exp (m_{t}^{1})$ and 
$p_{t}^{m^{2}}\exp (m_{t}^{2})$, for our identification results.) With these
notations, we make the following assumption on flexible input choice by
firms.

\begin{assumption}[Flexible Input Choice]
\label{a:flexible_input} A firm at time $t$ with the state variables $%
(l_{t},k_{t},\omega_{t})$ chooses the flexible input vector $%
(m^1_{t},m^2_{t})$ by the optimization problem 
\begin{eqnarray*}
\max_{(m^1,m^2) \in \mathbb{R}_+^2} p_{t}^y \exp\left(\Psi\left(
l_{t},k_{t},m^1,m^2,\omega_{t} \right)\right) E[\exp\left(\eta_{t}\right)]
-p_{t}^{m^1} \exp\left( m^1 \right) -p_{t}^{m^2} \exp\left( m^2 \right),
\end{eqnarray*}
where $p_{t}^{m^1} > 0$ and $p_{t}^{m^2} > 0$ almost surely.
\end{assumption}

This assumption consists of two parts. First, each firm makes the choice of
the flexible input vector $(m_{t}^{1},m_{t}^{2})$ by the expected profit
maximization against unforeseen shocks $\eta _{t}$ given the state variables 
$(l_{t},k_{t},\omega _{t})$. Second, firms almost surely face strictly
positive flexible input prices. Furthermore, in order to guarantee the
existence of these flexible input solutions, we make the following
assumption of diminishing returns with respect to flexible input.

\begin{assumption}[Finite Solution]
\label{a:finite} $\beta_{m^1}(\omega_{t})+\beta_{m^2}(\omega_{t}) < 1$
almost surely.
\end{assumption}

Note that this assumption only requires diminishing returns with respect to
the subvector $(m^1_{t},m^2_{t})$ of only flexible inputs, and not
necessarily with respect to the entire vector $(l_{t},k_{t},m^1_{t},m^2_{t})$
of production factors. Assumptions \ref{a:measurable} and \ref{a:technology}
stated below concern about the heterogeneous elasticity functions $%
(\beta_{l}(\cdot),\beta_{k}(\cdot),\beta_{m^1}(\cdot),\beta_{m^2}(\cdot),%
\beta_{0}(\cdot))$, and involve requirements for a number of functions to be
measurable.

\begin{assumption}[Measurability]
\label{a:measurable} $\beta_{l}(\cdot)$, $\beta_{k}(\cdot)$, $%
\beta_{m^1}(\cdot)$, $\beta_{m^2}(\cdot)$ and $\beta_{0}(\cdot)$ are
measurable functions.
\end{assumption}

Assumption \ref{a:measurable} is satisfied if, for example, $\beta_{l}(\cdot)
$, $\beta_{k}(\cdot)$, $\beta_{m^1}(\cdot)$, $\beta_{m^2}(\cdot)$ and $%
\beta_{0}(\cdot)$ are continuous functions of the latent technology $%
\omega_{t}$. In particular, for the additive-productivity models considered
in the literature \citep[e.g.,][]{OlPa96,LePe03,Wo09}, this assumption is
trivially satisfied as $\beta_{l}(\cdot)$, $\beta_{k}(\cdot)$, $%
\beta_{m^1}(\cdot)$, and $\beta_{m^2}(\cdot)$, being constant functions
(i.e., $\beta_{l}(\omega_{t}) \equiv \beta_{l}$, $\beta_{k}(\omega_{t})
\equiv \beta_{k}$, $\beta_{m^1}(\omega_{t}) \equiv \beta_{m^1}$, and $%
\beta_{m^2}(\omega_{t}) \equiv \beta_{m^2}$) are continuous, and $%
\beta_{0}(\cdot)$ being the identity function (i.e., $\beta_{0}(\omega_{t})
\equiv \omega_{t}$) is continuous as well.

\begin{assumption}[Non-Collinear Heterogeneity]
\label{a:technology} The function $\omega \mapsto \beta_{m^1}(\omega) /
\beta_{m^2}(\omega)$ is measurable and invertible with measurable inverse.
\end{assumption}

\begin{figure}[tbp]
\centering
\setlength{\unitlength}{0.6cm} 
\begin{tabular}{cc}
\begin{picture}(12,12) \put(0,10){(a) \ \ Assumption \ref{a:technology} is
satisfied.} \multiput(1,1)(1,0){1}{\vector(0,1){8}}
\multiput(1,1)(1,0){1}{\vector(1,0){8}} \put(9,0.5){$\beta_{m^1}$}
\put(-0.2,9){$\beta_{m^2}$} \linethickness{0.25mm} \qbezier(1,2)(5,4)(9,6)
\put(6.8,4.5){\scriptsize$\omega_{t} \mapsto (\beta_{m^1}(\omega_{t}),$}
\put(8.3,4,0){\scriptsize$\beta_{m^2}(\omega_{t}))$} \linethickness{0.1mm}
\curvedashes[0.5mm]{1,1} \qbezier(1,1)(4,5)(7,9) \qbezier(1,1)(5,5)(8,8)
\qbezier(1,1)(5,4)(9,7) \put(1.8,2.7){\scriptsize$\omega'$}
\put(2.8,3.2){\scriptsize$\omega''$} \put(4.75,4.2){\scriptsize$\omega'''$}
\put(2.19,2.59){\circle*{0.25}} \put(3.00,3.00){\circle*{0.25}}
\put(5.03,4.03){\circle*{0.25}} \end{picture} & \begin{picture}(12,12)
\put(0,10){(a$^\prime$) \qquad Assumption \ref{a:technology} fails.}
\multiput(1,1)(1,0){1}{\vector(0,1){8}}
\multiput(1,1)(1,0){1}{\vector(1,0){8}} \put(9,0.5){$\beta_{m^1}$}
\put(-0.2,9){$\beta_{m^2}$} \linethickness{0.25mm} \qbezier(1,1)(5,4)(9,7)
\put(6.0,4.3){\scriptsize$\omega_{t} \mapsto (\beta_{m^1}(\omega_{t}),$}
\put(7.5,3,8){\scriptsize$\beta_{m^2}(\omega_{t}))$} \linethickness{0.1mm}
\curvedashes[0.5mm]{1,1} \qbezier(1,1)(4,5)(7,9) \qbezier(1,1)(5,5)(8,8)
\qbezier(1,1)(5,4)(9,7) \put(4.8,4.2){\scriptsize$\omega'$}
\put(5.9,5.1){\scriptsize$\omega''$} \put(7.2,6.1){\scriptsize$\omega'''$}
\put(5,4){\circle*{0.25}} \put(6.35,5){\circle*{0.25}}
\put(7.7,6){\circle*{0.25}} \end{picture} \\ 
\begin{picture}(12,12) \put(0,10){(b) \ \ Assumption \ref{a:technology} is
satisfied.} \multiput(1,1)(1,0){1}{\vector(0,1){8}}
\multiput(1,1)(1,0){1}{\vector(1,0){8}} \put(9,0.5){$\beta_{m^1}$}
\put(-0.2,9){$\beta_{m^2}$} \linethickness{0.25mm} \qbezier(1,3)(5,3)(9,6)
\put(6.5,4.0){\scriptsize$\omega_{t} \mapsto (\beta_{m^1}(\omega_{t}),$}
\put(8.0,3,5){\scriptsize$\beta_{m^2}(\omega_{t}))$} \linethickness{0.1mm}
\curvedashes[0.5mm]{1,1} \qbezier(1,1)(4,5)(7,9) \qbezier(1,1)(5,5)(8,8)
\qbezier(1,1)(5,4)(9,7) \put(2.36,3.31){\scriptsize$\omega'$}
\put(3.06,3.46){\scriptsize$\omega''$}
\put(4.15,3.75){\scriptsize$\omega'''$} \put(2.56,3.11){\circle*{0.25}}
\put(3.26,3.26){\circle*{0.25}} \put(4.45,3.55){\circle*{0.25}} \end{picture}
& \begin{picture}(12,12) \put(0,10){(b$^\prime$) \qquad Assumption
\ref{a:technology} fails.} \multiput(1,1)(1,0){1}{\vector(0,1){8}}
\multiput(1,1)(1,0){1}{\vector(1,0){8}} \put(9,0.5){$\beta_{m^1}$}
\put(-0.2,9){$\beta_{m^2}$} \linethickness{0.25mm} \qbezier(1,3)(9,3)(5,9)
\put(5.5,3.5){\scriptsize$\omega_{t} \mapsto (\beta_{m^1}(\omega_{t}),$}
\put(7.0,3,0){\scriptsize$\beta_{m^2}(\omega_{t}))$} \linethickness{0.1mm}
\curvedashes[0.5mm]{1,1} \qbezier(1,1)(4,5)(7,9) \qbezier(1,1)(5,5)(8,8)
\qbezier(1,1)(5,4)(9,7) \put(2.20,3.24){\scriptsize$\omega'$}
\put(2.95,3.35){\scriptsize$\omega''$}
\put(3.90,3.55){\scriptsize$\omega'''$} \put(2.50,3.05){\circle*{0.25}}
\put(3.15,3.15){\circle*{0.25}} \put(4.15,3.35){\circle*{0.25}}
\put(6.06,7.26){\scriptsize$\omega''''''$}
\put(6.48,6.08){\scriptsize$\omega'''''$}
\put(6.40,4.70){\scriptsize$\omega''''$} \put(5.86,7.46){\circle*{0.25}}
\put(6.28,6.28){\circle*{0.25}} \put(6.20,4.90){\circle*{0.25}} \end{picture}%
\end{tabular}
${}$%
\caption{Illustrations of Assumption \protect\ref{a:technology}. The bold
solid curves indicate the paths representing the technology $\protect\omega%
_{t} \mapsto (\protect\beta_{m^1}(\protect\omega_{t}),\protect\beta_{m^2}(%
\protect\omega_{t}))$. The dashed rays from the origin indicate linear
paths. The left column, (a) and (b), of the figure illustrates cases that
satisfy Assumption \protect\ref{a:technology}. In these graphs, the ratio $%
\protect\beta_{m^1}(\protect\omega) / \protect\beta_{m^2}(\protect\omega)$
is associated with a unique value of $\protect\omega$ provided an injective
technology $\protect\omega_{t} \mapsto (\protect\beta_{m^1}(\protect\omega%
_{t}),\protect\beta_{m^2}(\protect\omega_{t}))$. The right column, (a$%
^{\prime }$) and (b$^{\prime }$), of the figure illustrates cases that
violate Assumption \protect\ref{a:technology}. In these graphs, the ratio $%
\protect\beta_{m^1}(\protect\omega) / \protect\beta_{m^2}(\protect\omega)$
is not associated with a unique value of $\protect\omega$ even if the
technology $\protect\omega_{t} \mapsto (\protect\beta_{m^1}(\protect\omega%
_{t}),\protect\beta_{m^2}(\protect\omega_{t}))$ is injective.}
\label{fig:a:technology}
\end{figure}
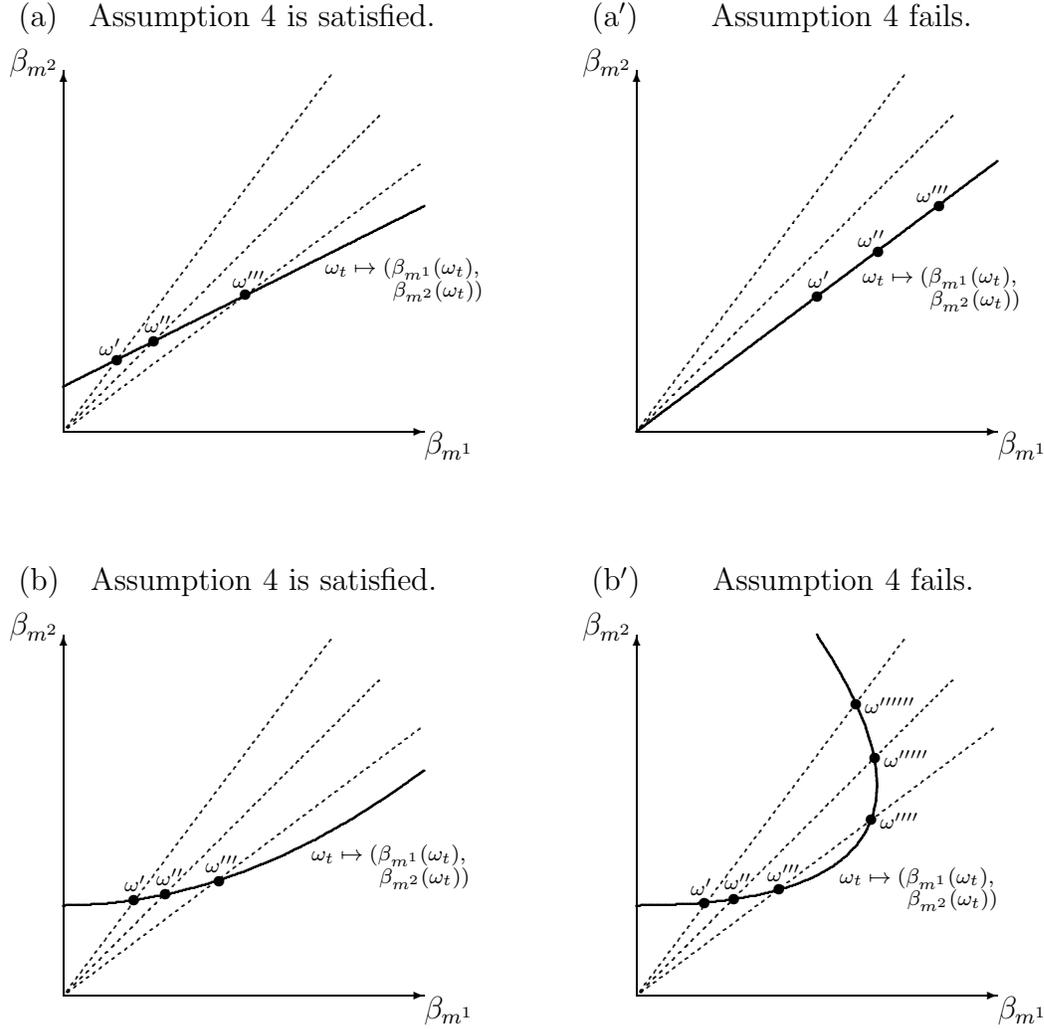

Assumption \ref{a:technology} is satisfied if the rate at which the latent
technology $\omega_{t}$ increases the output elasticity $\beta_{m^1}(%
\omega_{t})$ with respect to $m^1$ is strictly higher or strictly lower than
the rate at which the latent technology $\omega_{t}$ increases the output
elasticity $\beta_{m^2}(\omega_{t})$ with respect to $m^2$. Figure \ref%
{fig:a:technology} provides a geometric illustration of Assumption \ref%
{a:technology}. The bold solid curves indicate the technological paths $%
\omega_{t} \mapsto (\beta_{m^1}(\omega_{t}),\beta_{m^2}(\omega_{t}))$. The
dashed rays from the origin indicate linear paths. The left column, (a) and
(b), of the figure illustrates cases that satisfy Assumption \ref%
{a:technology}. In these graphs, the ratio $\beta_{m^1}(\omega) /
\beta_{m^2}(\omega)$ is associated with a unique value of $\omega$ provided
an injective technology $\omega_{t} \mapsto
(\beta_{m^1}(\omega_{t}),\beta_{m^2}(\omega_{t}))$. Assumption \ref%
{a:technology} is a nonparametric shape restriction, as opposed to a
parametric functional restriction, and hence both a simple linear case (a)
and a nonlinear case (b) satisfy this assumption. The right column, (a$%
^{\prime }$) and (b$^{\prime }$), of the figure illustrates cases that
violate Assumption \ref{a:technology}. In these graphs, the ratio $%
\beta_{m^1}(\omega) / \beta_{m^2}(\omega)$ is not associated with a unique
value of $\omega$ even if the technology $\omega_{t} \mapsto
(\beta_{m^1}(\omega_{t}),\beta_{m^2}(\omega_{t}))$ is injective.

Note the difference between Assumption \ref{a:technology} and the
invertibility assumptions used in the nonparametric control function
approaches \citep[e.g.,][]{OlPa96,LePe03}, where an invertible mapping
between the technology $\omega_{t}$ and an observable, such as investment
choice or flexible input choice, is assumed to exist. Assumption \ref%
{a:technology} does not assume such an inversion between the technology and
observed choices by firms. Assumption \ref{a:technology} only requires that
the latent technology $\omega_{t}$ has a one-to-one relation with the ratio $%
\beta_{m^1}(\omega_{t})/\beta_{m^2}(\omega_{t})$ of the elasticities with
respect to two flexible inputs. One restrictive feature of this
non-collinearity assumption is that it rules out constant coefficients ($%
\beta_{m^1}(\omega_{t}) \equiv \beta_{m^1}$ and $\beta_{m^2}(\omega_{t})
\equiv \beta_{m^2}$) for the two flexible inputs.

Finally, we state the following independence assumption for the
idiosyncratic shock $\eta_{t}$, which is standard in the literature. It
requires that the idiosyncratic shock $\eta_{t}$ is unknown by a firm at the
time when it makes input decisions for production to take place in period $t$.

\begin{assumption}[Independence]
\label{a:independence} $(l_{t},k_{t},\omega_{t},p_{t}^{m^1},p_{t}^{m^2})
\perp\!\!\!\perp \eta_{t}$.
\end{assumption}

We introduce the flexible input cost ratio defined by 
\begin{equation}  \label{eq:ratio}
r^{1,2}_{t} = \frac{p_{t}^{m^1} \exp(m^1_{t})}{p_{t}^{m^2} \exp(m^2_{t})}.
\end{equation}
We argue in Lemmas \ref{lemma:r} and \ref{lemma:control_variable} below in
the identification section that this ratio plays an important role as a
control variable for the latent productivity $\omega_{t}$ under Assumptions %
\ref{a:flexible_input} and \ref{a:technology}. We emphasize that the input
prices, $p^{m^1}_{t}$ and $p^{m^2}_{t}$, need not be observed in data. It
suffices to observe the input costs, $p_{t}^{m^1} \exp(m^1_{t})$ and $%
p_{t}^{m^2} \exp(m^2_{t})$.

We also introduce the \textit{ex ante} input cost share of $m^\iota$ for
each $\iota \in \{1,2\}$, defined by 
\begin{equation}  \label{eq:share}
s^\iota_{t} = \frac{ p_{t}^{m^\iota} \exp(m^\iota) }{ \E[ p_{t}^y
\exp(y_{t}) | l_{t}, k_{t}, m^1_{t}, m^2_{t}, r^{1,2}_{t} ] }.
\end{equation}
This is an \textit{ex ante} share because the output is given by the
conditional expectation given the information $(l_{t}, k_{t}, m^1_{t},
m^2_{t}, r^{1,2}_{t})$ observable from the viewpoint of the econometrician
(as well as the firm) before the idiosyncratic shock $\eta_{t}$ is realized.
The prices, $p^{m^1}_{t}$, $p^{m^2}_{t}$ or $p_{t}^y$, need not be observed
in data. It suffices to observe the input costs, $p_{t}^{m^1} \exp(m^1_{t})$
and $p_{t}^{m^2} \exp(m^2_{t})$, and the output value, $p_{t}^y \exp(y_{t})$%
. We argue in Lemma \ref{lemma:beta_m} below in the identification section
that this \textit{ex ante} cost share can be used to identify the
coefficients, $\beta_{m^1}(\omega_{t})$ and $\beta_{m^2}(\omega_{t})$, of
the flexible inputs, $m^1$ and $m^2$, under Assumptions \ref%
{a:flexible_input}, \ref{a:finite}, \ref{a:measurable}, \ref{a:technology},
and \ref{a:independence} .


\section{An Overview}

\label{sec:overview}


\subsection{An Overview of the Construction of a Control Variable}

\label{sec:overview_control_variable}

It is perhaps intuitive and is also formally shown in Lemma \ref{lemma:r}
below that Assumption \ref{a:flexible_input} (flexible input choice) implies
the equality $r^{1,2}_{t} = \beta_{m^1}(\omega_{t}) / \beta_{m^2}(\omega_{t})
$ between the flexible input cost ratio and the ratio of the output
elasticities with respect to the two flexible inputs. By this equality,
econometricians may use the flexible input cost ratio $r^{1,2}_{t}$ in order
to control for the latent technology $\omega_{t}$ under Assumption \ref%
{a:technology}. Figure \ref{fig:overview_control_variable} graphically
describes how this is possible. The base figure used here is copied from
panel (b) of Figure \ref{fig:a:technology}, which illustrates a case where
Assumption \ref{a:technology} is satisfied. Figure \ref%
{fig:overview_control_variable} shows that firms with the flexible input
cost ratio $r^{1,2}_{t} = 0.8$, $1.0$, and $1.2$ are associated with the
latent technological levels of $\omega_{t} = \omega^{\prime }$, $%
\omega^{\prime \prime }$, and $\omega^{\prime \prime \prime }$,
respectively. In this manner, Assumption \ref{a:technology} implies that
controlling for $r^{1,2}_{t}$ is equivalent to controlling for $w_{t}$. This
intuitive illustration also indicates how powerful and important Assumption %
\ref{a:technology} is for our identification strategy.

\begin{figure}[tbp]
\centering
\setlength{\unitlength}{0.6cm} 
\begin{tabular}{c}
\begin{picture}(22,20) \multiput(2,2)(2,0){1}{\vector(0,2){16}}
\multiput(2,2)(2,0){1}{\vector(2,0){16}} \put(18,1){$\beta_{m^1}$}
\put(0,18){$\beta_{m^2}$} \linethickness{0.25mm} \qbezier(2,6)(10,6)(18,12)
\put(13,8.3){$\omega_{t} \mapsto
(\beta_{m^1}(\omega_{t}),\beta_{m^2}(\omega_{t}))$} \linethickness{0.1mm}
\curvedashes[0.5mm]{1,1} \qbezier(2,2)(8,10)(14,18)
\qbezier(2,2)(10,10)(16,16) \qbezier(2,2)(10,8)(18,14)
\put(4.72,6.62){$\omega'$} \put(6.12,6.92){$\omega''$}
\put(8.30,7.49){$\omega'''$} \put(5.14,6.22){\circle*{0.35}}
\put(6.51,6.51){\circle*{0.35}} \put(8.89,7.10){\circle*{0.35}}
\put(14,18){$r^{1,2}_{t} = 0.8$} \put(16,16){$r^{1,2}_{t} = 1.0$}
\put(18,14){$r^{1,2}_{t} = 1.2$} \end{picture}%
\end{tabular}
${}$%
\caption{Illustrations of the construction of the control variable $%
r^{1,2}_{t}$ for the latent technology $\protect\omega_{t}$ under Assumption 
\protect\ref{a:technology}. Firms with the flexible input cost ratio $%
r^{1,2}_{t} = 0.8$, $1.0$, and $1.2$ are associated with the latent
technological levels of $\protect\omega_{t} = \protect\omega^{\prime }$, $%
\protect\omega^{\prime \prime }$, and $\protect\omega^{\prime \prime \prime }
$, respectively.}
\label{fig:overview_control_variable}
\end{figure}
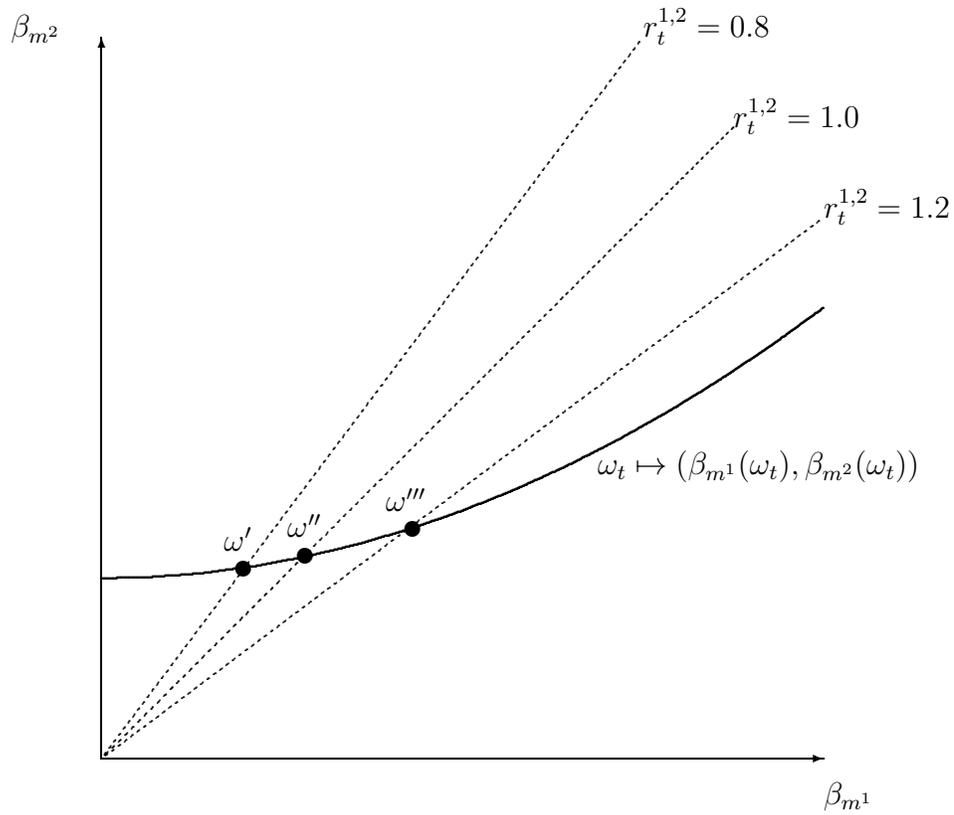


\subsection{An Overview of the Closed-Form Identifying Formulas}

\label{sec:overview_identifying_formulas}

In this section, we present a brief overview of all the closed-form
identifying formulas which we formally derive in Section \ref%
{sec:main_results}. For the model (\ref{eq:production_function})--(\ref%
{eq:short_hand}) equipped with Assumptions \ref{a:flexible_input}, \ref%
{a:finite}, \ref{a:measurable}, \ref{a:technology}, and \ref{a:independence}%
, the heterogeneous coefficients are identified for each firm residing in
the ``locality of identification'' (to be formally defined in Section \ref%
{sec:locality}) with the following closed-form formulas. The coefficients of
flexible inputs, $m^1$ and $m^2$, are identified in closed form by 
\begin{align}
\beta_{m^1}(\omega_{t}) &= s^1_{t} \qquad\text{and}  \notag \\
\beta_{m^2}(\omega_{t}) &= s^2_{t}.  \label{eq:overview_beta_m}
\end{align}
The coefficients of labor $l$ and capital $k$ are identified in turn in
closed form by 
\begin{align}
\beta_{l}(\omega_{t}) &= \left. \frac{\partial}{\partial l} E[ y_t -
\beta_{m^1}(\omega_{t}) m^1_{t} - \beta_{m^2}(\omega_{t}) m^2_{t} \ | \
l_{t} = l, k_{t}, r^{1,2}_{t}] \right\vert_{l=l_{t}} \qquad\text{and}  \notag
\\
\beta_{k}(\omega_{t}) &= \left. \frac{\partial}{\partial k} E[ y_t -
\beta_{m^1}(\omega_{t}) m^1_{t} - \beta_{m^2}(\omega_{t}) m^2_{t} \ | \
l_{t}, k_{t} = k, r^{1,2}_{t}] \right\vert_{k=k_{t}}.
\label{eq:overview_beta_l_beta_k}
\end{align}
Finally, the additive productivity is identified in closed form by 
\begin{equation}  \label{eq:overview_beta_0}
\beta_{0}(\omega_{t}) = E[ y_t - \beta_{l}(\omega_{t}) l_{t} -
\beta_{k}(\omega_{t}) k_{t} - \beta_{m^1}(\omega_{t}) m^1_{t} -
\beta_{m^2}(\omega_{t}) m^2_{t} \ | \ l_{t}, k_{t}, r^{1,2}_{t}].
\end{equation}
The next section presents a formal argument to derive these closed-form
identifying formulas.


\section{Main Results}

\label{sec:main_results}


\subsection{Locality of Identification}

\label{sec:locality}

A potential obstacle to identification of production function models is that
functional dependence of input choices on the latent technology may induce
rank deficiency and thus a failure in identification \citep{AcCaFr15}. A few
structural models to eliminate this functional dependence problem are
proposed by \citet{AcCaFr15}. Among those structural models, \cite{HuHuSa17}
empirically support the second model of \citet{AcCaFr15} that labor input $%
l_{t}$ as well as capital $k_{t}$ are chosen earlier than period $t$. This
structural assumption is also consistent with our Assumption \ref%
{a:flexible_input} where only $m_{t}^{1}$ and $m_{t}^{2}$ are treated as
flexible inputs. If $l_{t}$ and $k_{t}$ are chosen prior to realization of $%
\omega _{t}$, then stochastic evolution of $\omega _{t}$ will allow for
elimination of functional dependence in the sense that it allows for a
non-degenerate conditional distribution of $l_{t}$ given $(k_{t},\omega _{t})
$ and a non-degenerate conditional distribution of $k_{t}$ given $%
(l_{t},\omega _{t})$. As we discuss below after Lemma \ref%
{lemma:beta_l_beta_k}, this non-degeneracy is crucial for identification of $%
\beta _{l}(\cdot )$ and $\beta _{k}(\cdot )$ in the relevant locality. In
light of the timing assumption with stochastic evolution of $\omega _{t}$,
we assume that $l_{t}$, $k_{t}$, and $\omega _{t}$ are continuous random
variables throughout this paper, and thus the location $(l_{t},k_{t},\omega
_{t})=(l^{\ast },k^{\ast },\omega ^{\ast })$ defined below enables the local
identification.

\begin{definition}[Locality of Identification]
\label{def:locality_of_identification} The point $(l_{t},k_{t},\omega_{t}) =
(l^\ast,k^\ast,\omega^\ast)$ is called the locality of identification if
there are real numbers $\underline{f}, \overline{f} \in (0,\infty)$ such
that $\underline{f} < f_{l_{t},k_{t},\omega_{t}}(l,k,\omega) < \overline{f}$
for all $(l,k,\omega)$ in a neighborhood of $(l^\ast,k^\ast,\omega^\ast)$.
\end{definition}

\begin{proposition}[Functional Independence]
\label{prop:functional_independence} If $(l^\ast,k^\ast,\omega^\ast)$ is the
locality of identification according to Definition \ref%
{def:locality_of_identification}, then: (i) the conditional density function 
$f_{l_{t}|k_{t},\omega_{t}}( \ \cdot \ |k^\ast,\omega^\ast)$ exists and is
positive in a neighborhood of $l=l^\ast$; and (ii) the conditional density
function $f_{k_{t}|l_{t},\omega_{t}}( \ \cdot \ |l^\ast,\omega^\ast)$ exists
and is positive in a neighborhood of $k=k^\ast$.
\end{proposition}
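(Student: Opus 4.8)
The plan is to deduce the two conditional statements directly from the uniform two-sided bound on the joint density $f_{l_t,k_t,\omega_t}$ postulated in Definition \ref{def:locality_of_identification}, by integrating out the appropriate coordinate over a small box. I will prove part (i); part (ii) is symmetric under interchanging the roles of $l_t$ and $k_t$.

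First I would fix the neighborhood $N$ of $(l^\ast,k^\ast,\omega^\ast)$ on which $\underline f < f_{l_t,k_t,\omega_t}(l,k,\omega) < \overline f$ holds, and shrink it to a product box $N = I_l \times I_k \times I_\omega$ around $(l^\ast,k^\ast,\omega^\ast)$, which is possible since any neighborhood of a point in $\re^3$ contains such a box. Then, for fixed $(k,\omega) \in I_k \times I_\omega$, the marginal density $f_{k_t,\omega_t}(k,\omega) = \int_{\re} f_{l_t,k_t,\omega_t}(l,k,\omega)\dx l$ satisfies $f_{k_t,\omega_t}(k,\omega) \ge \int_{I_l} f_{l_t,k_t,\omega_t}(l,k,\omega)\dx l \ge \underline f \cdot |I_l| > 0$, so in particular $f_{k_t,\omega_t}(k^\ast,\omega^\ast) > 0$ and the conditional density $f_{l_t|k_t,\omega_t}(l \mid k^\ast,\omega^\ast) = f_{l_t,k_t,\omega_t}(l,k^\ast,\omega^\ast)/f_{k_t,\omega_t}(k^\ast,\omega^\ast)$ is well defined for $l \in I_l$. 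For such $l$ it is bounded below by $\underline f / f_{k_t,\omega_t}(k^\ast,\omega^\ast)$, which is a strictly positive constant; hence the conditional density exists and is positive on the neighborhood $I_l$ of $l = l^\ast$. Part (ii) follows by the identical argument with $l$ and $k$ exchanged, using $f_{l_t,\omega_t}(l^\ast,\omega^\ast) \ge \underline f \cdot |I_k| > 0$.

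The only genuinely delicate point is the measure-theoretic bookkeeping around conditional densities: one must note that the joint density is finite (bounded above by $\overline f$) so that the integrals defining the marginals are finite, and that the conditional density is, up to the usual almost-everywhere ambiguity, given by the ratio of joint to marginal on the set where the marginal is positive; since here the marginal is shown to be bounded away from zero on a whole neighborhood, the ratio is an honest, pointwise-defined positive function there. I do not anticipate any substantive obstacle beyond stating this carefully, as the continuity assumption on $(l_t,k_t,\omega_t)$ already guarantees the existence of the joint density, and the two-sided bound does all the remaining work.
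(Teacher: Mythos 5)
Your proof is correct and follows essentially the same route as the paper's: lower-bound the marginal $f_{k_t,\omega_t}(k^\ast,\omega^\ast)$ by integrating the joint density over a slice of positive length inside the neighborhood, then define the conditional density as the ratio of joint to marginal. If anything, your explicit lower bound $\underline{f}/f_{k_t,\omega_t}(k^\ast,\omega^\ast)$ on the conditional density addresses the positivity claim more directly than the paper's proof, which only records an upper bound.
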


\begin{proof}
By Definition \ref{def:locality_of_identification}, there is $\varepsilon \in (0,\infty)$ such that $\underline{f} < f_{l_{t},k_{t},\omega_{t}}(l,k,t) < \overline{f}$ for all $(l,k,t)$ in an $\varepsilon$-ball of $(l^\ast,k^\ast,\omega^\ast)$.
Thus, $f_{k_{t},\omega_{t}}(k^\ast,\omega^\ast) > 2\varepsilon\underline{f} > 0$.
Therefore, the conditional density function $f_{l_{t}|k_{t},\omega_{t}}( \ \cdot \ |k^\ast,\omega^\ast) = f_{l_{t},k_{t},\omega_{t}}( \ \cdot \ ,k^\ast,\omega^\ast) / f_{k_{t},\omega_{t}}(k^\ast,\omega^\ast)$ exists, and is bounded by $\overline{f} / f_{k_{t},\omega_{t}}(k^\ast,\omega^\ast)$ in the $\varepsilon$-ball of $l=l^\ast$.
This proves part (i).
A proof of part (ii) similar by exchanging the roles of $l$ and $k$.
\end{proof}


\subsection{Control Variable}

\label{sec:control_variable}

We construct a control variable via the first-order condition explicitly
exploiting the structural information. The flexible input choice rule in
Assumption \ref{a:flexible_input} yields the following restrictions as the
first-order condition. 
\begin{align}  \label{eq:foc1}
p_{t}^y \beta_{m^\iota}(\omega_{t}) \exp\left(\Psi\left(
l_{t},k_{t},m^1_{t},m^2_{t},\omega_{t} \right)\right) \E\left[\exp(\eta_{t})%
\right] = p_{t}^{m^\iota} \exp(m^\iota)
\end{align}
for each $\iota \in \{1,2\}$. Furthermore, Assumption \ref{a:finite}
guarantees that the solution to the flexible input choice problem exists,
and is explicitly given by 
\begin{align}
m^1_{t} = \ln p_{t}^y &+ \frac{(1-\beta_{m^2}(\omega_{t}))\ln\frac{%
\beta_{m^1}(\omega_{t})}{p_{t}^{m^1}} + \beta_{m^2}(\omega_{t})\ln\frac{%
\beta_{m^2}(\omega_{t})}{p_{t}^{m^2}}}{1-\beta_{m^1}(\omega_{t})-%
\beta_{m^2}(\omega_{t})}  \notag \\
&+ \frac{\beta_{l}(\omega) l_{t} + \beta_{k}(\omega) k_{t} +
\beta_{0}(\omega_{t}) + \ln\E[\exp(\eta_{t})]}{1-\beta_{m^1}(\omega_{t})-%
\beta_{m^2}(\omega_{t})}  \label{eq:m1}
\end{align}
\begin{align}
m^2_{t} = \ln p_{t}^y &+ \frac{\beta_{m^1}(\omega_{t})\ln\frac{%
\beta_{m^1}(\omega_{t})}{p_{t}^{m^1}} + (1-\beta_{m^1}(\omega_{t}))\ln\frac{%
\beta_{m^2}(\omega_{t})}{p_{t}^{m^2}}}{1-\beta_{m^1}(\omega_{t})-%
\beta_{m^2}(\omega_{t})}  \notag \\
&+ \frac{\beta_{l}(\omega) l_{t} + \beta_{k}(\omega) k_{t} +
\beta_{0}(\omega_{t}) + \ln\E[\exp(\eta_{t})]}{1-\beta_{m^1}(\omega_{t})-%
\beta_{m^2}(\omega_{t})}  \label{eq:m2}
\end{align}
These two equations under Assumption \ref{a:measurable} imply that $%
(m^1_{t},m^2_{t})$ is a measurable function of the state variables $%
(l_{t},k_{t},\omega_{t},p_{t}^{m^1},p_{t}^{m^2})$. With these solutions to
the static optimization problem as auxiliary tools, we now proceed with the
construction of a control variable for the unobserved technology $\omega_{t}$%
. The next lemma shows that the flexible input cost ratio $r^{1,2}_{t}$
defined in (\ref{eq:ratio}) identifies the ratio of the heterogeneous
coefficients of two flexible inputs.

\begin{lemma}[Identification of the Ratio $\protect\beta_{m^1}(\protect\omega%
_{t}) / \protect\beta_{m^2}(\protect\omega_{t})$]
\label{lemma:r} If Assumption \ref{a:flexible_input} is satisfied, then 
\begin{equation}  \label{eq:identification_ratio}
\frac{\beta_{m^1}(\omega_{t})}{\beta_{m^2}(\omega_{t})} = r^{1,2}_{t}.
\end{equation}
\end{lemma}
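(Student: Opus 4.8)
The plan is to derive the identity directly from the first-order conditions for the two flexible inputs. From Assumption \ref{a:flexible_input}, the firm solves a static expected-profit maximization problem in $(m^1,m^2)$, so the first-order conditions \eqref{eq:foc1} hold at the optimum for each $\iota\in\{1,2\}$:
\begin{align*}
p_{t}^y \beta_{m^\iota}(\omega_{t}) \exp\left(\Psi\left(l_{t},k_{t},m^1_{t},m^2_{t},\omega_{t}\right)\right)\E\left[\exp(\eta_{t})\right] = p_{t}^{m^\iota}\exp(m^\iota_{t}).
\end{align*}
The key observation is that the left-hand side factors as a common quantity $C_t := p_{t}^y \exp\left(\Psi\left(l_{t},k_{t},m^1_{t},m^2_{t},\omega_{t}\right)\right)\E\left[\exp(\eta_{t})\right]$ multiplied by $\beta_{m^\iota}(\omega_{t})$, where $C_t$ does not depend on the index $\iota$. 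This is exactly the structural feature of Cobb-Douglas that makes the argument work.

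First I would write the two first-order conditions explicitly as $\beta_{m^1}(\omega_{t})\,C_t = p_{t}^{m^1}\exp(m^1_{t})$ and $\beta_{m^2}(\omega_{t})\,C_t = p_{t}^{m^2}\exp(m^2_{t})$. Next I would divide the first equation by the second; the common factor $C_t$ cancels, yielding
\begin{align*}
\frac{\beta_{m^1}(\omega_{t})}{\beta_{m^2}(\omega_{t})} = \frac{p_{t}^{m^1}\exp(m^1_{t})}{p_{t}^{m^2}\exp(m^2_{t})} = r^{1,2}_{t},
\end{align*}
where the last equality is just the definition \eqref{eq:ratio} of the flexible input cost ratio. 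This establishes \eqref{eq:identification_ratio}.

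The only point requiring care is the legitimacy of the division, which amounts to checking that the relevant quantities are nonzero. The denominator on the right, $p_{t}^{m^2}\exp(m^2_{t})$, is strictly positive since $p_{t}^{m^2}>0$ almost surely by Assumption \ref{a:flexible_input} and $\exp(m^2_{t})>0$ always; hence from the second first-order condition $\beta_{m^2}(\omega_{t})\,C_t>0$, so in particular $\beta_{m^2}(\omega_{t})\neq 0$ and $C_t\neq 0$, and dividing is valid. (Implicitly the first-order condition also forces $\beta_{m^\iota}(\omega_{t})>0$, consistent with these being genuine output elasticities.) Since there is no delicate estimation here — everything follows from the algebra of the first-order conditions — I do not anticipate a substantive obstacle; the main thing to state cleanly is that the interior optimum characterized by \eqref{eq:foc1} exists, which is exactly what Assumption \ref{a:finite} guarantees, though for this lemma only Assumption \ref{a:flexible_input} is needed to assert the first-order conditions at whatever optimal $(m^1_t,m^2_t)$ the firm chooses.
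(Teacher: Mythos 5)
Your proposal is correct and follows essentially the same route as the paper: take the ratio of the first-order condition \eqref{eq:foc1} for $\iota=1$ over that for $\iota=2$, cancel the common factor, and apply the definition \eqref{eq:ratio}. The extra care you take about the positivity of $p_{t}^{m^2}\exp(m^2_{t})$ and hence the legitimacy of the division is a fine (if brief) addition but does not change the argument.
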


\begin{proof}
Assumption \ref{a:flexible_input} yields the first-order condition (\ref{eq:foc1}).
Taking the ratio of this first-order condition for $\iota=1$ over the first-order condition for $\iota=2$ yields (\ref{eq:identification_ratio}).
\end{proof}

From Assumption \ref{a:technology} and (\ref{eq:identification_ratio}) that
holds under Assumption \ref{a:flexible_input}, we can see that the flexible
input cost ratio $r^{1,2}_{t}$ defined in (\ref{eq:ratio}) can be used as a
control variable for the unobserved latent productivity $\omega_{t}$.
Specifically, we state the following lemma.

\begin{lemma}[Control Variable]
\label{lemma:control_variable} If Assumptions \ref{a:flexible_input} and \ref%
{a:technology} satisfied, then there exists a measurable invertible function 
$\phi$ with a measurable inverse $\phi^{-1}$ such that 
\begin{equation}  \label{eq:control_variable}
\omega_{t} = \phi(r^{1,2}_{t}).
\end{equation}
\end{lemma}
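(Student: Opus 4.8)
The strategy is to compose the two maps already in hand. By Lemma~\ref{lemma:r}, Assumption~\ref{a:flexible_input} gives the pointwise identity $r^{1,2}_{t} = \beta_{m^1}(\omega_{t})/\beta_{m^2}(\omega_{t})$; and by Assumption~\ref{a:technology}, the function $g:\omega \mapsto \beta_{m^1}(\omega)/\beta_{m^2}(\omega)$ is measurable and invertible with measurable inverse $g^{-1}$. Combining these two facts, $r^{1,2}_{t} = g(\omega_{t})$ almost surely, and applying $g^{-1}$ to both sides yields $\omega_{t} = g^{-1}(r^{1,2}_{t})$ almost surely. So the claimed function is simply $\phi := g^{-1}$, which is measurable by Assumption~\ref{a:technology}, invertible with measurable inverse $\phi^{-1} = g$ (again by Assumption~\ref{a:technology}). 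That is the whole content.

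First I would write down the identity from Lemma~\ref{lemma:r}, being careful to note it holds $\beta_{m^2}(\omega_t)$-almost-surely-nonzero --- which is implicit since $r^{1,2}_t$ as defined in (\ref{eq:ratio}) is a ratio of strictly positive input costs (Assumption~\ref{a:flexible_input} guarantees $p_t^{m^\iota} > 0$ a.s., and $\exp(m^\iota_t) > 0$), hence $r^{1,2}_t \in (0,\infty)$, so the left side $\beta_{m^1}(\omega_t)/\beta_{m^2}(\omega_t)$ is a well-defined positive real and in particular $\beta_{m^2}(\omega_t) \neq 0$. Then I would invoke Assumption~\ref{a:technology} to name $g$ and $g^{-1}$, set $\phi = g^{-1}$, and conclude (\ref{eq:control_variable}) by substitution. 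The measurability and invertibility properties of $\phi$ transfer directly from the stated properties of $g$ and $g^{-1}$ in Assumption~\ref{a:technology}.

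There is essentially no obstacle here; this lemma is a bookkeeping step that records the consequence of the two preceding assumptions for later use. The only mild subtlety worth a sentence is domain/codomain matching: $g$ is defined on the support of $\omega_t$ and (by Lemma~\ref{lemma:r}) its image on that support coincides with the support of $r^{1,2}_t$, so $\phi = g^{-1}$ is well-defined on the relevant range of the control variable, and the displayed equality is to be read as holding almost surely (on the support of $(\omega_t, r^{1,2}_t)$). I would state it that way to be safe, but nothing deeper is needed.
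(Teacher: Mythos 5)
Your proposal is correct and follows exactly the paper's own (one-line) argument: compose the identity $r^{1,2}_{t}=\beta_{m^1}(\omega_t)/\beta_{m^2}(\omega_t)$ from Lemma~\ref{lemma:r} with the inverse of the map in Assumption~\ref{a:technology}, setting $\phi$ equal to that measurable inverse. The extra remarks on positivity of the cost ratio and domain matching are fine but add nothing beyond what the paper leaves implicit.
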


\begin{proof}
The claim follows from Lemma \ref{lemma:r} and Assumption \ref{a:technology}.
\end{proof}

In light of this lemma, we can interpret the flexible input cost ratio $%
r_{t}^{1,2}$ as a normalized observable measure of the unobserved latent
productivity $\omega _{t}$ through the normalizing transformation $\phi $.
With this interpretation, the identification of $\beta _{l}(\cdot )$, $\beta
_{k}(\cdot )$, $\beta _{m^{1}}(\cdot )$, $\beta _{m^{2}}(\cdot )$, and $%
\beta _{0}(\cdot )$ may be achieved by the identification of $\beta
_{l}\circ \phi (\cdot )$, $\beta _{k}\circ \phi (\cdot )$, $\beta
_{m^{1}}\circ \phi (\cdot )$, $\beta _{m^{2}}\circ \phi (\cdot )$, and $%
\beta _{0}\circ \phi (\cdot )$, respectively, which take the normalized
observable measure $r_{t}^{1,2}$ of the unobserved productivity $\omega _{t}$%
. The closed-form identification results stated as Lemmas \ref{lemma:beta_m}%
, \ref{lemma:beta_l_beta_k}, and \ref{lemma:beta_0} in Section \ref%
{sec:identification} indeed consist of identifying formulas in the forms of
these function compositions.


\subsection{Identification}

\label{sec:identification}


\subsubsection{Coefficients of Flexible Inputs}

Recall the \textit{ex ante} input cost shares, $s^1_{t}$ and $s^2_{t}$, of $%
m^1$ and $m^2$, respectively, defined in (\ref{eq:share}). These \textit{ex
ante} input cost shares are not directly observable from data, but are
directly identifiable from data. The following lemma shows that the marginal
product $\beta_{m^\iota}(\omega_{t})$ of flexible input $m^\iota$ can be
identified by this \textit{ex ante} input cost share for each $\iota \in
\{1,2\}$.

\begin{lemma}[Identification of $\protect\beta_{m^1}(\cdot)$ and $\protect%
\beta_{m^2}(\cdot)$]
\label{lemma:beta_m} If Assumptions \ref{a:flexible_input}, \ref{a:finite}, %
\ref{a:measurable}, \ref{a:technology}, and \ref{a:independence} are
satisfied, then 
\begin{align}
\beta_{m^1}(\omega_{t}) &= \beta_{m^1}(\phi(r^{1,2}_{t})) = s^1_{t} \qquad%
\text{and}  \notag \\
\beta_{m^2}(\omega_{t}) &= \beta_{m^2}(\phi(r^{1,2}_{t})) = s^2_{t}
\label{eq:identification_beta_m}
\end{align}
hold.
\end{lemma}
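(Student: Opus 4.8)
The plan is to read the identity straight off the firm's first-order condition and then strip out the idiosyncratic shock by a conditional-expectation argument, exploiting the control variable of Lemma~\ref{lemma:control_variable}. First I would invoke Assumption~\ref{a:flexible_input} --- with an interior maximizer guaranteed by Assumption~\ref{a:finite} --- to obtain the first-order condition~(\ref{eq:foc1}), namely $p_{t}^{y}\beta_{m^{\iota}}(\omega_{t})\exp(\Psi(l_{t},k_{t},m_{t}^{1},m_{t}^{2},\omega_{t}))\,\E[\exp(\eta_{t})]=p_{t}^{m^{\iota}}\exp(m_{t}^{\iota})$ for each $\iota\in\{1,2\}$. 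Substituting $y_{t}=\Psi(l_{t},k_{t},m_{t}^{1},m_{t}^{2},\omega_{t})+\eta_{t}$ from~(\ref{eq:production_function}) and using the first-order condition to eliminate $p_{t}^{y}\exp(\Psi(\cdot))$, I get the pointwise identity $p_{t}^{y}\exp(y_{t})=\frac{p_{t}^{m^{\iota}}\exp(m_{t}^{\iota})}{\beta_{m^{\iota}}(\omega_{t})\,\E[\exp(\eta_{t})]}\,\exp(\eta_{t})$.

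Next I would take the conditional expectation of this identity given the econometrician's information $(l_{t},k_{t},m_{t}^{1},m_{t}^{2},r_{t}^{1,2})$. Three facts drive the computation. (i) By Lemma~\ref{lemma:control_variable} (which uses Assumptions~\ref{a:flexible_input} and~\ref{a:technology}), $\omega_{t}=\phi(r_{t}^{1,2})$, so $\beta_{m^{\iota}}(\omega_{t})=\beta_{m^{\iota}}(\phi(r_{t}^{1,2}))$ is, by Assumption~\ref{a:measurable}, a measurable function of the conditioning variables and hence factors out of the conditional expectation. (ii) The input cost $p_{t}^{m^{\iota}}\exp(m_{t}^{\iota})$ belongs to the conditioning information and hence also factors out. (iii) Assumption~\ref{a:independence}, together with the explicit solution formulas~(\ref{eq:m1})--(\ref{eq:m2}) showing that $(m_{t}^{1},m_{t}^{2})$ --- and therefore $r_{t}^{1,2}$ and the input costs --- are deterministic functions of variables the firm knows before $\eta_{t}$ is realized, gives $\E[\exp(\eta_{t})\mid l_{t},k_{t},m_{t}^{1},m_{t}^{2},r_{t}^{1,2},\,p_{t}^{m^{\iota}}\exp(m_{t}^{\iota})]=\E[\exp(\eta_{t})]$. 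Combining (i)--(iii) yields $\E[p_{t}^{y}\exp(y_{t})\mid l_{t},k_{t},m_{t}^{1},m_{t}^{2},r_{t}^{1,2}]=p_{t}^{m^{\iota}}\exp(m_{t}^{\iota})/\beta_{m^{\iota}}(\omega_{t})$, and substituting this into the definition~(\ref{eq:share}) of $s_{t}^{\iota}$ gives $s_{t}^{\iota}=\beta_{m^{\iota}}(\omega_{t})=\beta_{m^{\iota}}(\phi(r_{t}^{1,2}))$, which is~(\ref{eq:identification_beta_m}). Running the argument for $\iota=1$ and $\iota=2$ finishes the proof.

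The algebra is a one-line substitution; the delicate part is the bookkeeping in the conditional-expectation step. Concretely, one must make precise that conditioning on $(l_{t},k_{t},m_{t}^{1},m_{t}^{2},r_{t}^{1,2})$ amounts to conditioning on a sub-$\sigma$-algebra of the one generated by $(l_{t},k_{t},\omega_{t},p_{t}^{m^{1}},p_{t}^{m^{2}})$ (and the output price) --- so that Assumption~\ref{a:independence} genuinely delivers the required mean-independence of $\eta_{t}$ from the conditioning information and from $p_{t}^{m^{\iota}}\exp(m_{t}^{\iota})$ --- and that the numerator $p_{t}^{m^{\iota}}\exp(m_{t}^{\iota})$ is measurable with respect to the conditioning information. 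Both points rest on Lemmas~\ref{lemma:r} and~\ref{lemma:control_variable} (which let $r_{t}^{1,2}$ encode $\omega_{t}$) together with the explicit reduced forms~(\ref{eq:m1})--(\ref{eq:m2}); Assumption~\ref{a:finite} enters only to guarantee that~(\ref{eq:foc1}) truly characterizes the firm's choice.
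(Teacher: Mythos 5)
Your proof is correct and follows essentially the same route as the paper's: the first-order condition (\ref{eq:foc1}), the control-variable argument (Lemmas \ref{lemma:r}--\ref{lemma:control_variable}) making $\beta_{m^\iota}(\omega_{t})=\beta_{m^\iota}(\phi(r^{1,2}_{t}))$ measurable with respect to the conditioning variables, independence of $\eta_{t}$ from the conditioning information via the reduced forms (\ref{eq:m1})--(\ref{eq:m2}) and Assumption \ref{a:independence}, and then matching the result with the definition (\ref{eq:share}) of $s^\iota_{t}$. The only cosmetic difference is order of operations --- you substitute the first-order condition into the output value before conditioning, whereas the paper conditions $\exp(y_{t})$ first and then divides the first-order condition by that conditional expectation --- and your treatment of the input cost (equivalently the output price) as known given the conditioning information mirrors the paper's own implicit step of identifying $p_{t}^{y}\,\E[\exp(y_{t})\mid l_{t},k_{t},m^1_{t},m^2_{t},r^{1,2}_{t}]$ with $\E[p_{t}^{y}\exp(y_{t})\mid l_{t},k_{t},m^1_{t},m^2_{t},r^{1,2}_{t}]$.
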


This lemma establishes the identifying formulas (\ref{eq:overview_beta_m})
presented in Section \ref{sec:overview}.

\begin{proof}
The proof of this lemma consists of four steps.
First, note that Lemma \ref{lemma:control_variable} under Assumptions \ref{a:flexible_input} and \ref{a:technology} implies the equaivalence between the two sigma algebras:
\begin{equation}\label{eq:same_sigma_algebra}
\sigma( l_{t}, k_{t}, m^1_{t}, m^2_{t}, r^{1,2}_{t} )
= 
\sigma( l_{t}, k_{t}, m^1_{t}, m^2_{t}, \omega_{t} ).
\end{equation}

Second, Assumptions \ref{a:flexible_input} and \ref{a:finite} yield (\ref{eq:m1}) and (\ref{eq:m2}), which in turn imply under Assumption \ref{a:measurable} that $(m^1_{t},m^2_{t})$ is a measurable function of $(l_{t},k_{t},\omega_{t},p_{t}^{m^1},p_{t}^{m^2})$.
In light of this, applying Theorem 2.1.6 of \citet{Durrett2010} to Assumption \ref{a:independence} yields\footnote{
The referenced theorem says that $X \ind Y$ implies $f_1(X) \ind f_2(X)$ for any measurable functions $f_1$ and $f_2$.
}
\begin{equation}\label{eq:independence_m}
(l_{t},k_{t},m^1_{t},m^2_{t},\omega_{t}) \ind \eta_{t}.
\end{equation}

Third, we obtain the following chain of equalities.
\begin{align}
\E[ \exp(y_{t}) | l_{t}, k_{t}, m^1_{t}, m^2_{t}, r^{1,2}_{t} ]
=
&\E[ \exp(y_{t}) | l_{t}, k_{t}, m^1_{t}, m^2_{t}, \omega_{t} ]
\nonumber\\
=
&\E[ \exp\left(\Psi\left( l_{t},k_{t},m^1_{t},m^2_{t},\omega_{t} \right) + \eta_{t}\right) | l_{t}, k_{t}, m^1_{t}, m^2_{t}, \omega_{t} ]
\nonumber\\
= 
&\exp\left(\Psi\left( l_{t},k_{t},m^1_{t},m^2_{t},\omega_{t} \right)\right)
\E[\exp(\eta_{t}) | l_{t}, k_{t}, m^1_{t}, m^2_{t}, \omega_{t}] 
\nonumber\\
= 
&\exp\left(\Psi\left( l_{t},k_{t},m^1_{t},m^2_{t},\omega_{t} \right)\right)
\E[\exp(\eta_{t})] 
\label{eq:conditional_expectation_y}
\end{align}
where the first equality is due to (\ref{eq:same_sigma_algebra}), the second equality follows from substitution of the gross-output production function (\ref{eq:production_function}), the third equality follows from the property of the conditional expectation that $\E[\Psi_1(X_1)\Psi_2(X_2)|X_1]=\Psi_1(X_1)\E[\Psi_2(X_2)|X_1]$, and the fourth equality follows from (\ref{eq:independence_m}).

Finally, taking the ratio of (\ref{eq:foc1}) to (\ref{eq:conditional_expectation_y}) yields
$$
\beta_{m^\iota}(\omega_{t})
= 
\frac{ p_{t}^{m^\iota} \exp(m^\iota) }{ p_{t}^y \E[ \exp(y_{t}) | l_{t}, k_{t}, m^1_{t}, m^2_{t}, r^{1,2}_{t} ] }
$$
for each $\iota \in \{1,2\}$. 
By the definition of $s^\iota_{t}$ given in (\ref{eq:share}), this proves (\ref{eq:identification_beta_m}).
\end{proof}


\subsubsection{Coefficients of Labor and Capital Inputs}

We now introduce the short-hand notation, 
\begin{equation*}
\tilde y_{t} := y_{t} - s^1_{t} m^1_{t} - s^2_{t} m^2_{t}, 
\end{equation*}
which can be interpreted as the output net of the flexible input
contributions by Lemma \ref{lemma:beta_m} under Assumptions \ref%
{a:flexible_input}, \ref{a:finite}, \ref{a:measurable}, \ref{a:technology},
and \ref{a:independence}. Thus, we can rewrite the gross-output production
function (\ref{eq:production_function})--(\ref{eq:short_hand}) into the
net-output production function 
\begin{equation}  \label{eq:net_production_function}
\tilde y_{t} = \beta_{l}(\omega_{t}) l_{t} + \beta_{k}(\omega_{t}) k_{t} +
\beta_{0}(\omega_{t}) + \eta_{t}.
\end{equation}
It remains to identify the remaining heterogeneous coefficient functions $%
\beta_{l}(\cdot)$, $\beta_{k}(\cdot)$, and $\beta_{0}(\cdot)$. The next
lemma proposes the identification of $\beta_{l}(\cdot)$ and $\beta_{k}(\cdot)
$.

\begin{lemma}[Identification of $\protect\beta_{l}(\cdot)$ and $\protect\beta%
_{k}(\cdot)$]
\label{lemma:beta_l_beta_k} If Assumptions \ref{a:flexible_input}, \ref%
{a:finite}, \ref{a:measurable}, \ref{a:technology}, and \ref{a:independence}
are satisfied, then 
\begin{align}
\beta_{l}(\omega_{t}) &= \beta_{l}(\phi(r^{1,2}_{t})) = \left.\frac{\partial%
}{\partial l}\E[\tilde y_{t} | l_{t}=l, k_{t}, r^{1,2}_{t}]%
\right\vert_{l=l_{t}} \quad\text{and}  \notag \\
\beta_{k}(\omega_{t}) &= \beta_{k}(\phi(r^{1,2}_{t})) = \left.\frac{\partial%
}{\partial k}\E[\tilde y_{t} | l_{t}, k_{t}=k, r^{1,2}_{t}]%
\right\vert_{k=k_{t}}  \label{eq:beta_l_beta_k}
\end{align}
hold.
\end{lemma}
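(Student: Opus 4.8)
The plan is to start from the net-output production function (\ref{eq:net_production_function}), which is already available because Lemma \ref{lemma:beta_m} (under the stated assumptions) lets us rewrite $\tilde y_t = \beta_l(\omega_t) l_t + \beta_k(\omega_t) k_t + \beta_0(\omega_t) + \eta_t$. The key observation is that by Lemma \ref{lemma:control_variable}, conditioning on $r^{1,2}_t$ is the same as conditioning on $\omega_t = \phi(r^{1,2}_t)$, so on the event $\{r^{1,2}_t = r\}$ the coefficients $\beta_l(\omega_t)$, $\beta_k(\omega_t)$, $\beta_0(\omega_t)$ are constants, namely $\beta_l(\phi(r))$, etc. Thus the first step is to take the conditional expectation of $\tilde y_t$ given $(l_t = l, k_t, r^{1,2}_t)$ and invoke $\sigma(l_t,k_t,r^{1,2}_t) = \sigma(l_t,k_t,\omega_t)$ together with the independence $(l_t,k_t,\omega_t) \ind \eta_t$ (which follows from Assumption \ref{a:independence}, exactly as in the proof of Lemma \ref{lemma:beta_m}) to kill the $\eta_t$ term: $\E[\tilde y_t \mid l_t = l, k_t, r^{1,2}_t] = \beta_l(\phi(r^{1,2}_t))\, l + \beta_k(\phi(r^{1,2}_t))\, k_t + \beta_0(\phi(r^{1,2}_t))$.

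Second, differentiate this expression in $l$. Since $\beta_l(\phi(r^{1,2}_t))$, $\beta_k(\phi(r^{1,2}_t)) k_t$, and $\beta_0(\phi(r^{1,2}_t))$ do not depend on the evaluation point $l$ once we hold $k_t$ and $r^{1,2}_t$ fixed, $\tfrac{\partial}{\partial l}\E[\tilde y_t \mid l_t = l, k_t, r^{1,2}_t] = \beta_l(\phi(r^{1,2}_t))$ identically in $l$; evaluating at $l = l_t$ gives the first display in (\ref{eq:beta_l_beta_k}). The argument for $\beta_k$ is symmetric, differentiating in $k$ while holding $l_t$ and $r^{1,2}_t$ fixed. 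One should state the simple algebraic step that $y_t - s^1_t m^1_t - s^2_t m^2_t$ is exactly $\tilde y_t$ and coincides with $y_t - \beta_{m^1}(\omega_t) m^1_t - \beta_{m^2}(\omega_t) m^2_t$ by Lemma \ref{lemma:beta_m}, so that the conditional-expectation object in the lemma statement equals the one in (\ref{eq:overview_beta_l_beta_k}).

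The main obstacle is the well-posedness of the derivative, i.e. making sure the partial derivative $\tfrac{\partial}{\partial l}\E[\tilde y_t \mid l_t = l, k_t, r^{1,2}_t]$ actually exists and that the conditional expectation is a genuine function of the free argument $l$ rather than an almost-everywhere-defined object that could be perturbed on a null set. This is where the locality of identification and Proposition \ref{prop:functional_independence} enter: on the locality $(l^\ast,k^\ast,\omega^\ast)$ the conditional density $f_{l_t \mid k_t, \omega_t}(\cdot \mid k^\ast,\omega^\ast)$ exists and is bounded away from zero in a neighborhood of $l^\ast$, which guarantees that there is genuine variation in $l_t$ after conditioning on $(k_t,\omega_t)$ — equivalently on $(k_t, r^{1,2}_t)$ — so the conditional expectation is defined on a nondegenerate interval of $l$-values and its derivative there is well-defined. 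I would therefore phrase the differentiation step as being carried out at the locality of identification, noting that the affine-in-$l$ form of the conditional mean makes the derivative trivially exist and be constant once nondegeneracy is secured; the symmetric statement for $k$ uses part (ii) of Proposition \ref{prop:functional_independence}. The remaining steps are routine and parallel the structure of the proof of Lemma \ref{lemma:beta_m}.
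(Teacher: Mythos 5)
Your proposal is correct and follows essentially the same route as the paper: the sigma-algebra equivalence $\sigma(l_t,k_t,r^{1,2}_t)=\sigma(l_t,k_t,\omega_t)$, the independence $(l_t,k_t,\omega_t)\ind\eta_t$ from Assumption \ref{a:independence} to eliminate $\eta_t$, substitution of the net-output form valid by Lemma \ref{lemma:beta_m}, and differentiation of the resulting affine-in-$(l,k)$ conditional mean evaluated at $(l_t,k_t,r^{1,2}_t)$. Your extra care about well-posedness of the derivative via Proposition \ref{prop:functional_independence} and the locality of identification is exactly the point the paper makes in its remark immediately following the lemma rather than inside the proof, so it is a welcome but not divergent addition.
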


This lemma establishes the identifying formulas (\ref%
{eq:overview_beta_l_beta_k}) presented in Section \ref{sec:overview}.

\begin{proof}
The proof of this lemma consists of four steps.
First, note that Assumption \ref{a:technology} together with Lemma \ref{lemma:r} under Assumption \ref{a:flexible_input} implies the equivalence between the two sigma algebras:
\begin{equation}\label{eq:same_sigma_algebra_without_m}
\sigma( l_{t}, k_{t}, r^{1,2}_{t} )
= 
\sigma( l_{t}, k_{t}, \omega_{t} ).
\end{equation}

Second, applying the decomposition property of the semi-graphoid axiom \citep[][pp. 11]{Pearl2000} to Assumption \ref{a:independence} yields
$(l_{t},k_{t},\omega_{t}) \ind \eta_{t}$.
This independence and the restriction $\E[\eta_{t}]$ in the gross-output production function model (\ref{eq:production_function}) together yield
\begin{equation}\label{eq:eta_l_k_omega}
\E[ \eta_{t} | l_{t}, k_{t}, \omega_{t}] = 0.
\end{equation}

Third, we obtain the following chain of equalities.
\begin{align}
\E[\tilde y_{t} | l_{t}, k_{t}, r^{1,2}_{t}]
&=
\E[\tilde y_{t} | l_{t}, k_{t}, \omega_{t}]
\nonumber\\
&=
\E[ \beta_{l}(\omega_{t}) l_{t} + \beta_{k}(\omega_{t}) k_{t} + \beta_{0}(\omega_{t}) + \eta_{t} | l_{t}, k_{t}, \omega_{t}]
\nonumber\\
&=
\beta_{l}(\omega_{t}) l_{t} + \beta_{k}(\omega_{t}) k_{t} + \beta_{0}(\omega_{t}) + \E[ \eta_{t} | l_{t}, k_{t}, \omega_{t}]
\nonumber\\
&=
\beta_{l}(\omega_{t}) l_{t} + \beta_{k}(\omega_{t}) k_{t} + \beta_{0}(\omega_{t})
\label{eq:conditional_expectation_y_tilde}
\end{align}
where the first equality is due to (\ref{eq:same_sigma_algebra_without_m}), the second equality follows from a substitution of the net-output production function (\ref{eq:net_production_function}) which is valid by Lemma \ref{lemma:beta_m} under Assumptions \ref{a:flexible_input}, \ref{a:finite}, \ref{a:measurable}, \ref{a:technology}, and \ref{a:independence}, the third equality follows from the information $\sigma( l_{t}, k_{t}, \omega_{t} )$ on which the conditional expectation is taken, and the fourth equality follows from (\ref{eq:eta_l_k_omega}).

Fourth, substituting (\ref{eq:control_variable}) under Assumptions \ref{a:flexible_input} and \ref{a:technology} in (\ref{eq:conditional_expectation_y_tilde}), we obtain
$$
\E[\tilde y_{t} | l_{t}=l, k_{t}=k, r^{1,2}_{t}=r] = \beta_{l}(\phi(r)) l + \beta_{k}(\phi(r)) k + \beta_{0}(\phi(r)).
$$
The map $(l,k,r) \mapsto \E[\tilde y_{t} | l_{t}=l, k_{t}=k, r^{1,2}_{t}=r]$ is thus an affine function of $(l,k)$, and thus is differentiable in $(l,k)$ in particular.
Differentiating both sides of this equation with respect to $l$ and $k$ at $(l,k,r) = (l_{t},k_{t},r^{1,2}_{t})$ yields
\begin{align*}
\left.\frac{\partial}{\partial l}\E[\tilde y_{t} | l_{t}=l, k_{t}=k, r^{1,2}_{t}=r]\right\vert_{(l,k,r)=(l_{t},k_{t},r^{1,2}_{t})} &= \beta_{l}(\phi(r^{1,2}_{t})) = \beta_{l}(\omega_{t})
\qquad\text{and}\\
\left.\frac{\partial}{\partial k}\E[\tilde y_{t} | l_{t}=l, k_{t}=k, r^{1,2}_{t}=r]\right\vert_{(l,k,r)=(l_{t},k_{t},r^{1,2}_{t})} &= \beta_{k}(\phi(r^{1,2}_{t})) = \beta_{k}(\omega_{t}), 
\end{align*}
respectively.
This shows (\ref{eq:beta_l_beta_k}).
\end{proof}

Lemma \ref{lemma:beta_l_beta_k} paves the way for identification of $%
\beta_{l}(\omega_{t})$ and $\beta_{k}(\omega_{t})$, but it in fact does not
guarantee the identification by itself. To make sense of (\ref%
{eq:beta_l_beta_k}) as identifying formulas, $l_{t}$ should be functionally
independent of $(k_{t},r^{1,2}_{t})$, and, similarly, $k_{t}$ should be
functionally independent of $(l_{t},r^{1,2}_{t})$ in the language of \cite%
{AcCaFr15}. Proposition \ref{prop:functional_independence} shows that the
locality of identification introduced in Definition \ref%
{def:locality_of_identification} satisfies the functional independence.
Therefore, (\ref{eq:beta_l_beta_k}) can be interpreted as the identifying
formulas at such localities.


\subsubsection{Additive Technology}

We now introduce the further short-hand notation, 
\begin{equation}  \label{eq:y_tilde_tilde}
\tilde{\tilde y}_{t} := \tilde y_{t} - \left.\frac{\partial}{\partial l}%
\E[\tilde y_{t} | l_{t}=l, k_{t}, r^{1,2}_{t}]\right\vert_{l=l_{t}} l_{t} -
\left.\frac{\partial}{\partial k}\E[\tilde y_{t} | l_{t}, k_{t}=k,
r^{1,2}_{t}]\right\vert_{k=k_{t}} k_{t},
\end{equation}
which can be interpreted as the residual of the production function (\ref%
{eq:production_function}) by Lemma \ref{lemma:beta_m} and \ref%
{lemma:beta_l_beta_k} under Assumptions \ref{a:flexible_input}, \ref%
{a:finite}, \ref{a:measurable}, \ref{a:technology}, and \ref{a:independence}%
. With this new notation, therefore, we can rewrite the net-output
production function (\ref{eq:net_production_function}) into 
\begin{equation*}
\tilde{\tilde y}_{t} = \beta_{0}(\omega_{t}) + \eta_{t}.
\end{equation*}
The final step is to identify the additive productivity $\beta_{0}(%
\omega_{t})$. The next lemma provides the identification of $%
\beta_{0}(\omega_{t})$.

\begin{lemma}[Identification of $\protect\beta_{0}(\cdot)$]
\label{lemma:beta_0} If Assumptions \ref{a:flexible_input}, \ref{a:finite}, %
\ref{a:measurable}, \ref{a:technology}, and \ref{a:independence} are
satisfied, then 
\begin{align}
\beta_{0}(\omega_{t}) = \beta_{0}(\phi(r^{1,2}_{t})) = \E\left[\tilde{\tilde
y}_{t} \left\vert l_{t}, k_{t}, r^{1,2}_{t}\right.\right]
\end{align}
holds.
\end{lemma}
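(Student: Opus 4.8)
The plan is to mirror the structure of the proof of Lemma~\ref{lemma:beta_l_beta_k}, exploiting that by now the output has been stripped down to the purely additive productivity plus noise. First I would recall that under Assumptions~\ref{a:flexible_input} and~\ref{a:technology}, Lemma~\ref{lemma:control_variable} gives $\omega_{t} = \phi(r^{1,2}_{t})$ with $\phi$ measurable and invertible, and hence (as in the first step of the proof of Lemma~\ref{lemma:beta_l_beta_k}) the sigma-algebra equality $\sigma(l_{t},k_{t},r^{1,2}_{t}) = \sigma(l_{t},k_{t},\omega_{t})$ holds. Second, I would invoke Lemmas~\ref{lemma:beta_m} and~\ref{lemma:beta_l_beta_k} to justify that, under the stated assumptions, the definition of $\tilde{\tilde y}_{t}$ in \eqref{eq:y_tilde_tilde} indeed reduces to $\tilde{\tilde y}_{t} = \beta_{0}(\omega_{t}) + \eta_{t}$; this is the representation displayed just before the lemma statement, so it can be quoted directly.

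The core computation is then a short chain of conditional-expectation equalities: $\E[\tilde{\tilde y}_{t} \mid l_{t}, k_{t}, r^{1,2}_{t}] = \E[\tilde{\tilde y}_{t} \mid l_{t}, k_{t}, \omega_{t}] = \E[\beta_{0}(\omega_{t}) + \eta_{t} \mid l_{t}, k_{t}, \omega_{t}] = \beta_{0}(\omega_{t}) + \E[\eta_{t} \mid l_{t}, k_{t}, \omega_{t}] = \beta_{0}(\omega_{t})$, where the first equality uses the sigma-algebra equivalence, the second substitutes the additive representation, the third uses that $\beta_{0}(\omega_{t})$ is measurable with respect to the conditioning information, and the fourth uses $\E[\eta_{t} \mid l_{t}, k_{t}, \omega_{t}] = 0$. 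That last fact is exactly \eqref{eq:eta_l_k_omega}, established in the second step of the proof of Lemma~\ref{lemma:beta_l_beta_k} by applying the decomposition property of the semi-graphoid axiom to Assumption~\ref{a:independence} together with the restriction $\E[\eta_{t}]=0$ in \eqref{eq:production_function}; I would simply re-cite it. Finally $\beta_{0}(\omega_{t}) = \beta_{0}(\phi(r^{1,2}_{t}))$ is immediate from Lemma~\ref{lemma:control_variable}, which delivers the displayed identity.

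I do not anticipate a serious obstacle here — this lemma is the easiest of the three identification lemmas, since the flexible-input and labor/capital contributions have already been netted out and there is no differentiation or functional-independence argument needed. The only point requiring care is bookkeeping: one must be explicit that the representation $\tilde{\tilde y}_{t} = \beta_{0}(\omega_{t}) + \eta_{t}$ is itself a consequence of the earlier lemmas (so that all of Assumptions~\ref{a:flexible_input}, \ref{a:finite}, \ref{a:measurable}, \ref{a:technology}, \ref{a:independence} are genuinely in force), and that $\tilde{\tilde y}_{t}$ as \emph{defined} in \eqref{eq:y_tilde_tilde} — in terms of the observable conditional-derivative objects rather than the unknown $\beta$'s — is what appears inside the conditional expectation; the equality of the two is precisely the content of Lemmas~\ref{lemma:beta_m} and~\ref{lemma:beta_l_beta_k}. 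Modulo that remark, the proof is three or four lines.
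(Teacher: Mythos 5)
Your proposal is correct and follows essentially the same route as the paper: the paper simply rewrites equation (\ref{eq:conditional_expectation_y_tilde}) as $\E[\tilde y_{t}-\beta_{l}(\omega_{t})l_{t}-\beta_{k}(\omega_{t})k_{t}\,|\,l_{t},k_{t},r^{1,2}_{t}]=\beta_{0}(\omega_{t})$ and substitutes the definition (\ref{eq:y_tilde_tilde}) together with (\ref{eq:beta_l_beta_k}), which is the same computation you carry out by re-deriving the conditional-expectation chain from $\tilde{\tilde y}_{t}=\beta_{0}(\omega_{t})+\eta_{t}$ using the sigma-algebra equivalence and (\ref{eq:eta_l_k_omega}). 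Your bookkeeping remark that the additive representation of $\tilde{\tilde y}_{t}$ rests on Lemmas \ref{lemma:beta_m} and \ref{lemma:beta_l_beta_k} is exactly the role those lemmas play in the paper's substitution step.
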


This lemma establishes the identifying formula (\ref{eq:overview_beta_0})
presented in Section \ref{sec:overview}.

\begin{proof}
Equation (\ref{eq:conditional_expectation_y_tilde}) in the proof of Lemma \ref{lemma:beta_l_beta_k} can be rewritten as
\begin{align*}
\E\left[\tilde y_{t} - \beta_l(\omega_{t}) l_{t} - \beta_k(\omega_{t}) \left\vert l_{t}, k_{t}, r^{1,2}_{t} \right.\right]
=
\beta_{0}(\omega_{t}).
\end{align*}
Substituting the definition (\ref{eq:y_tilde_tilde}) of $\tilde{\tilde y}_{t}$ together with (\ref{eq:beta_l_beta_k}) of Lemma \ref{lemma:beta_l_beta_k} in the above equation proves the corollary.
\end{proof}


\subsection{Summary of the Main Results}

Summarizing the identification steps stated as Proposition \ref%
{prop:functional_independence} and Lemmas \ref{lemma:r}, \ref%
{lemma:control_variable}, \ref{lemma:beta_m}, \ref{lemma:beta_l_beta_k}, and %
\ref{lemma:beta_0}, we obtain the following theorem.

\begin{theorem}[Identification]
Suppose that Assumptions \ref{a:flexible_input}, \ref{a:finite}, \ref%
{a:measurable}, \ref{a:technology}, and \ref{a:independence} are satisfied
for the model (\ref{eq:production_function})--(\ref{eq:short_hand}). The
parameter vector $(\beta_{l}(\omega_{t}),\beta_{k}(\omega_{t}),\beta_{m^1}(%
\omega_{t}),\beta_{m^2}(\omega_{t}),\beta_{0}(\omega_{t}))$ with the latent
productivity $\omega_t$ is identified if there exists $(l_t,k_t)$ such that $%
(l_t,k_t,\omega_t)$ is at a locality of identification (Definition \ref%
{def:locality_of_identification}).
\end{theorem}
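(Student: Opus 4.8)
The plan is to assemble the final theorem as a straightforward synthesis of the lemmas already established, so the ``proof'' is essentially a bookkeeping argument that the chain of identification steps closes without circularity. First I would fix a realization of the latent productivity $\omega_t$ together with $(l_t,k_t)$ such that $(l_t,k_t,\omega_t)$ lies at a locality of identification in the sense of Definition \ref{def:locality_of_identification}, and note that everything appearing in the identifying formulas --- namely $r^{1,2}_{t}$, $s^1_{t}$, $s^2_{t}$, and the relevant conditional expectations of $y_t$ and of the constructed variables $\tilde y_t$, $\tilde{\tilde y}_t$ --- is a functional of the observable joint law of $(p^y_t\exp(y_t), p^{m^1}_t\exp(m^1_t), p^{m^2}_t\exp(m^2_t), l_t, k_t)$, hence identified by the econometrician.

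Next I would chain the lemmas in the order in which the constructions depend on one another. Lemma \ref{lemma:r} gives $\beta_{m^1}(\omega_t)/\beta_{m^2}(\omega_t) = r^{1,2}_t$ from Assumption \ref{a:flexible_input}, and Lemma \ref{lemma:control_variable}, invoking Assumption \ref{a:technology}, upgrades this to the existence of a measurable invertible $\phi$ with $\omega_t = \phi(r^{1,2}_t)$; this is what licenses writing all target objects as functions of the observable $r^{1,2}_t$. Then Lemma \ref{lemma:beta_m} delivers $\beta_{m^1}(\omega_t) = s^1_t$ and $\beta_{m^2}(\omega_t) = s^2_t$ in closed form, which in turn makes $\tilde y_t := y_t - s^1_t m^1_t - s^2_t m^2_t$ well-defined and equal to $\beta_l(\omega_t) l_t + \beta_k(\omega_t) k_t + \beta_0(\omega_t) + \eta_t$. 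Lemma \ref{lemma:beta_l_beta_k} then identifies $\beta_l(\omega_t)$ and $\beta_k(\omega_t)$ as the partial derivatives of $\E[\tilde y_t \mid l_t, k_t, r^{1,2}_t]$, and finally Lemma \ref{lemma:beta_0} identifies $\beta_0(\omega_t)$ as $\E[\tilde{\tilde y}_t \mid l_t, k_t, r^{1,2}_t]$. Since every lemma invoked requires only the maintained Assumptions \ref{a:flexible_input}--\ref{a:independence}, and each produces a formula in terms of objects already shown identified, the full vector $(\beta_l(\omega_t), \beta_k(\omega_t), \beta_{m^1}(\omega_t), \beta_{m^2}(\omega_t), \beta_0(\omega_t))$ is identified.

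The one genuine subtlety --- and the step I expect to need the most care --- is that Lemma \ref{lemma:beta_l_beta_k} by itself does \emph{not} guarantee identification: the formulas there are differential formulas in $l$ and $k$ holding $(k_t, r^{1,2}_t)$ (resp.\ $(l_t, r^{1,2}_t)$) fixed, and these are vacuous unless $l_t$ retains variation conditional on $(k_t, r^{1,2}_t)$ and symmetrically for $k_t$. This is precisely the functional dependence problem of \citeauthor{AcCaFr15}. The resolution is to invoke Proposition \ref{prop:functional_independence}: at a locality of identification the conditional densities $f_{l_t \mid k_t,\omega_t}(\cdot \mid k^\ast,\omega^\ast)$ and $f_{k_t \mid l_t,\omega_t}(\cdot \mid l^\ast,\omega^\ast)$ exist and are positive on a neighborhood, and since $\sigma(l_t,k_t,r^{1,2}_t) = \sigma(l_t,k_t,\omega_t)$ by Lemma \ref{lemma:control_variable}, this is the same as saying $l_t$ varies nondegenerately given $(k_t, r^{1,2}_t)$ and $k_t$ varies nondegenerately given $(l_t, r^{1,2}_t)$. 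Hence the conditional expectation $\E[\tilde y_t \mid l_t = l, k_t = k, r^{1,2}_t = r]$ is pinned down on an open set of $(l,k)$ around $(l_t,k_t)$, so its partial derivatives at $(l_t,k_t)$ are well-defined identified objects, and the affine structure established in the proof of Lemma \ref{lemma:beta_l_beta_k} makes those derivatives exactly $\beta_l(\omega_t)$ and $\beta_k(\omega_t)$. Conditioning the theorem on the existence of such a locality is exactly what makes this step go through, and I would state explicitly that this is where that hypothesis is used.

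\begin{proof}
Fix $\omega_t$ and choose $(l_t,k_t)$ so that $(l_t,k_t,\omega_t)$ is at a locality of identification; such a choice exists by hypothesis. Under Assumptions \ref{a:flexible_input} and \ref{a:technology}, Lemmas \ref{lemma:r} and \ref{lemma:control_variable} give a measurable invertible $\phi$ with measurable inverse such that $\omega_t = \phi(r^{1,2}_t)$, where $r^{1,2}_t$ is identified from the observable input costs. By Lemma \ref{lemma:beta_m}, under Assumptions \ref{a:flexible_input}, \ref{a:finite}, \ref{a:measurable}, \ref{a:technology}, and \ref{a:independence}, we have $\beta_{m^1}(\omega_t) = s^1_t$ and $\beta_{m^2}(\omega_t) = s^2_t$, where $s^1_t$ and $s^2_t$ are identified from the observable output value and input costs. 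Consequently $\tilde y_t = y_t - s^1_t m^1_t - s^2_t m^2_t$ is identified and satisfies the net-output model (\ref{eq:net_production_function}). By Lemma \ref{lemma:beta_l_beta_k}, $\beta_l(\omega_t)$ and $\beta_k(\omega_t)$ equal the partial derivatives of $(l,k) \mapsto \E[\tilde y_t \mid l_t = l, k_t = k, r^{1,2}_t]$ evaluated at $(l_t,k_t)$. By Proposition \ref{prop:functional_independence}, at a locality of identification the conditional density of $l_t$ given $(k_t,\omega_t)$ is positive on a neighborhood of $l_t$ and that of $k_t$ given $(l_t,\omega_t)$ is positive on a neighborhood of $k_t$; since $\sigma(l_t,k_t,r^{1,2}_t) = \sigma(l_t,k_t,\omega_t)$, the conditional expectation above is determined on an open neighborhood of $(l_t,k_t)$, so these partial derivatives are well-defined identified quantities. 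Hence $\beta_l(\omega_t)$ and $\beta_k(\omega_t)$ are identified, and $\tilde{\tilde y}_t$ defined in (\ref{eq:y_tilde_tilde}) is identified. Finally, by Lemma \ref{lemma:beta_0}, $\beta_0(\omega_t) = \E[\tilde{\tilde y}_t \mid l_t, k_t, r^{1,2}_t]$, which is identified. Therefore the entire vector $(\beta_l(\omega_t), \beta_k(\omega_t), \beta_{m^1}(\omega_t), \beta_{m^2}(\omega_t), \beta_0(\omega_t))$ is identified.
\end{proof}
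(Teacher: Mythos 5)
Your proof is correct and follows essentially the same route as the paper, which obtains the theorem by summarizing Proposition \ref{prop:functional_independence} and Lemmas \ref{lemma:r}--\ref{lemma:beta_0} in exactly the order you chain them. You also correctly pinpoint the only delicate point --- that the locality-of-identification hypothesis, via Proposition \ref{prop:functional_independence}, is what turns the differential formulas of Lemma \ref{lemma:beta_l_beta_k} into genuine identifying formulas --- which is precisely the role the paper assigns to it in the discussion following that lemma.
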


For a summary of all the closed-form identifying formulas, we refer the
readers the overview in Section \ref{sec:overview_identifying_formulas}. In
this paper, we focus on the identification problem, and leave aside methods
of estimation. Since what we identify are functions, $\beta _{l}\circ \phi
(\cdot )$, $\beta _{k}\circ \phi (\cdot )$, $\beta _{m^{1}}\circ \phi (\cdot
)$, $\beta _{m^{2}}\circ \phi (\cdot )$, and $\beta _{0}\circ \phi (\cdot )$%
, explicitly expressed in terms of nonparametric conditional expectation
functions and their derivatives, one may use analog nonparametric estimation
methods \citep[e.g.,][]{Ch07} to obtain function estimates $\widehat{\beta
_{l}\circ \phi }(\cdot )$, $\widehat{\beta _{k}\circ \phi }(\cdot )$, $%
\widehat{\beta _{m^{1}}\circ \phi }(\cdot )$, $\widehat{\beta _{m^{2}}\circ
\phi }(\cdot )$, and $\widehat{\beta _{0}\circ \phi }(\cdot )$. The
estimates of heterogeneous coefficients are then computed by $\widehat{\beta
_{l}}(\omega _{t})=\widehat{\beta _{l}\circ \phi }(r_{t}^{1,2})$, $\widehat{%
\beta _{k}}(\omega _{t})=\widehat{\beta _{k}\circ \phi }(r_{t}^{1,2})$, $%
\widehat{\beta _{m^{1}}}(\omega _{t})=\widehat{\beta _{m^{1}}\circ \phi }%
(r_{t}^{1,2})$, $\widehat{\beta _{m^{2}}}(\omega _{t})=\widehat{\beta
_{m^{2}}\circ \phi }(r_{t}^{1,2})$, and $\widehat{\beta _{0}}(\omega _{t})=%
\widehat{\beta _{0}\circ \phi }(r_{t}^{1,2})$.


\section{Alternative Models}

\label{sec:alternative_models}

In the baseline model, we considered a parsimonious form that consists of
only the two observed state variables $(l_{t},k_{t})$ and the two observed
flexible input variables $(m^1_{t},m^2_{t})$. In this section, we remark
that it is possible to augment the vectors of the observed state variables $%
(l_{t},k_{t})$ and/or the observed flexible input variables $%
(m^1_{t},m^2_{t})$ -- see Sections \ref{sec:more_state_variables}--\ref%
{sec:more_flexible_input_variables}. On the other hand, it is also possible
to reduce the model with just one $m_{t}$ variable, provided that $l_{t}$
satisfies the timing assumption as a flexible input -- see Section \ref%
{sec:single_m_variable}.


\subsection{More State Variables}

\label{sec:more_state_variables}

The baseline model treats the labor input $l_{t}$ only of a single type. In
applications, however, researchers often distinguish skilled labor input $%
l^s_{t}$ and unskilled labor input $l^u_{t}$ \citep[e.g.,][]{LePe03}. To
accommodate this distinction, we can augment the gross-output production
function in the logarithm (\ref{eq:production_function}) as 
\begin{equation}  \label{eq:aug:production_function}
y_{t} = \Psi\left( l^s_{t},l^u_{t},k_{t},m^1_{t},m^2_{t},\omega_{t} \right)
+ \eta_{t} \qquad \E[\eta_t]=0,
\end{equation}
where $l^s_{t}$ is the logarithm of skilled labor input, $l^y_{t}$ is the
logarithm of unskilled labor input, and all the other variables are the same
as in the baseline model. Accordingly, the Cobb-Douglas form (\ref%
{eq:short_hand}) is augmented as 
\begin{equation}  \label{eq:aug:short_hand}
\Psi\left(l^s_t,l^u_t,k,m^1_t,m^2_t,\omega_t\right) = \beta_{l^s}(\omega_t)
l^s_t + \beta_{l^u}(\omega_t) l^u_t + \beta_{k}(\omega_t) k_t +
\beta_{m^1}(\omega_t) m^1_t + \beta_{m^2}(\omega_t) m^2_t +
\beta_{0}(\omega_t)
\end{equation}
with heterogeneous coefficients $(\beta_{l^s}(\omega_t),\beta_{l^u}(%
\omega_t),\beta_{k}(\omega_t),\beta_{m^1}(\omega_t),\beta_{m^2}(\omega_t),%
\beta_{0}(\omega_t))$. With the following modifications of Assumptions \ref%
{a:flexible_input}, \ref{a:measurable}, and \ref{a:independence} adapted to
the current augmented model, we can construct the identification results in
similar lines of argument to those we had for the baseline model.

\newtheorem*{assumption_aug_flexible_input}{Assumption
\ref{a:flexible_input}$'$} 
\begin{assumption_aug_flexible_input}[Flexible Input Choice]
A firm at time $t$ with the state variables $(l^s_{t},l^u_{t},k_{t},\omega_{t})$ chooses the flexible input vector $(m^1_{t},m^2_{t})$ by the optimization problem
\begin{eqnarray*}
\max_{(m^1,m^2) \in \mathbb{R}_+^2}
p_{t}^y \exp\left(\Psi\left( l^s_{t},l^u_{t},k_{t},m^1,m^2,\omega_{t} \right)\right)
E[\exp\left(\eta_{t}\right)]
-p_{t}^{m^1} \exp\left( m^1 \right)
-p_{t}^{m^2} \exp\left( m^2 \right),
\end{eqnarray*}
 where $p_{t}^{m^1} > 0$ and $p_{t}^{m^2} > 0$ almost surely.
\end{assumption_aug_flexible_input}

\newtheorem*{assumption_aug_measurable}{Assumption \ref{a:measurable}$'$} 
\begin{assumption_aug_measurable}[Measurability]
$\beta_{l^s}(\cdot)$, $\beta_{l^u}(\cdot)$, $\beta_{k}(\cdot)$, $\beta_{m^1}(\cdot)$, $\beta_{m^2}(\cdot)$ and $\beta_{0}(\cdot)$ are measurable functions.
\end{assumption_aug_measurable}

\newtheorem*{assumption_aug_independence}{Assumption \ref{a:independence}$'$}
\begin{assumption_aug_independence}[Independence]
$(l^s_{t},l^u_{t},k_{t},\omega_{t},p_{t}^{m^1},p_{t}^{m^2}) \ind \eta_{t}$.
\end{assumption_aug_independence}

Assumption \ref{a:flexible_input}$^{\prime }$ proposes that both $l^s_{t}$
and $l^u_{t}$ are predetermined, which is empirically supported by \cite%
{HuHuSa17}. Assumptions \ref{a:measurable}$^{\prime }$ and \ref%
{a:independence}$^{\prime }$ are straightforward extensions of Assumptions %
\ref{a:measurable} and \ref{a:independence}, respectively, suitable for the
current model (\ref{eq:aug:production_function})--(\ref{eq:aug:short_hand}).
With the following modification to the definition of the locality of
identification adapted to the current augmented model, we state as a theorem
the extended identification result.

\newtheorem*{definition_aug_locality_of_identification}{Definition
\ref{def:locality_of_identification}$'$} 
\begin{definition_aug_locality_of_identification}[Locality of Identification]
The point $(l^s_{t},l^u_{t},k_{t},\omega_{t}) = (l^{s\ast},l^{u\ast},k^\ast,\omega^\ast)$ is called the locality of identification if there are real numbers $\underline{f}, \overline{f} \in (0,\infty)$ such that $\underline{f} < f_{l^s_{t},l^u_{t},k_{t},\omega_{t}}(l^s,l^u,k,\omega) < \overline{f}$ for all $(l^s,l^u,k,\omega)$ in a neighborhood of $(l^{s\ast},l^{u\ast},k^\ast,\omega^\ast)$.
\end{definition_aug_locality_of_identification}

\begin{theorem}[Identification]
Suppose that Assumptions \ref{a:flexible_input}$^{\prime }$, \ref{a:finite}, %
\ref{a:measurable}$^{\prime }$, \ref{a:technology}, and \ref{a:independence}$%
^{\prime }$ are satisfied for the model (\ref{eq:aug:production_function})--(%
\ref{eq:aug:short_hand}). The parameter vector $(\beta_{l^s}(\omega_{t}),%
\beta_{l^u}(\omega_{t}),\beta_{k}(\omega_{t}),\beta_{m^1}(\omega_{t}),%
\beta_{m^2}(\omega_{t}),\beta_{0}(\omega_{t}))$ with the latent productivity 
$\omega_t$ is identified if there exists $(l^s_t,l^u_t,k_t)$ such that $%
(l^s_t,l^u_t,k_t,\omega_t)$ is at a locality of identification (Definition %
\ref{def:locality_of_identification}$^{\prime }$).
\end{theorem}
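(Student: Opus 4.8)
The plan is to re-run the baseline argument almost verbatim, with the scalar labor input $l_t$ replaced by the pair $(l^s_t,l^u_t)$, since nothing in the treatment of the flexible inputs depends on the dimension of the predetermined-input vector. First I would note that Assumption \ref{a:technology} concerns only the ratio $\beta_{m^1}(\omega)/\beta_{m^2}(\omega)$ and is untouched by the augmentation, and that the flexible-input first-order condition under Assumption \ref{a:flexible_input}$^{\prime}$ is exactly (\ref{eq:foc1}) with $\Psi$ given by (\ref{eq:aug:short_hand}). Taking the ratio of the $\iota=1$ and $\iota=2$ conditions still gives $\beta_{m^1}(\omega_t)/\beta_{m^2}(\omega_t)=r^{1,2}_t$, so Lemmas \ref{lemma:r} and \ref{lemma:control_variable} carry over unchanged and produce a measurable invertible $\phi$ with measurable inverse such that $\omega_t=\phi(r^{1,2}_t)$.

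Next I would reproduce Lemma \ref{lemma:beta_m}. By Assumption \ref{a:finite} the flexible-input problem has a finite solution, namely the obvious analog of (\ref{eq:m1})--(\ref{eq:m2}) with $\beta_l(\omega)l_t$ replaced by $\beta_{l^s}(\omega)l^s_t+\beta_{l^u}(\omega)l^u_t$; under Assumption \ref{a:measurable}$^{\prime}$ this makes $(m^1_t,m^2_t)$ a measurable function of $(l^s_t,l^u_t,k_t,\omega_t,p^{m^1}_t,p^{m^2}_t)$, and combined with Assumption \ref{a:independence}$^{\prime}$ it yields $(l^s_t,l^u_t,k_t,m^1_t,m^2_t,\omega_t)\ind\eta_t$. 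Since $\omega_t=\phi(r^{1,2}_t)$, the sigma-algebra generated by $(l^s_t,l^u_t,k_t,m^1_t,m^2_t,r^{1,2}_t)$ equals that generated by $(l^s_t,l^u_t,k_t,m^1_t,m^2_t,\omega_t)$, and the same conditional-expectation manipulation as in (\ref{eq:conditional_expectation_y}) gives $\E[\exp(y_t)\mid l^s_t,l^u_t,k_t,m^1_t,m^2_t,r^{1,2}_t]=\exp(\Psi(\cdot))\E[\exp(\eta_t)]$. Dividing the first-order condition by this identifies $\beta_{m^\iota}(\omega_t)=s^\iota_t$, where $s^\iota_t$ is the \textit{ex ante} cost share (\ref{eq:share}) with the conditioning set enlarged to include both labor variables. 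Setting $\tilde y_t:=y_t-s^1_tm^1_t-s^2_tm^2_t$ then gives the net-output equation $\tilde y_t=\beta_{l^s}(\omega_t)l^s_t+\beta_{l^u}(\omega_t)l^u_t+\beta_k(\omega_t)k_t+\beta_0(\omega_t)+\eta_t$.

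For the predetermined-input coefficients I would reproduce Lemma \ref{lemma:beta_l_beta_k}: from $\sigma(l^s_t,l^u_t,k_t,r^{1,2}_t)=\sigma(l^s_t,l^u_t,k_t,\omega_t)$ and $\E[\eta_t\mid l^s_t,l^u_t,k_t,\omega_t]=0$ (the latter from Assumption \ref{a:independence}$^{\prime}$ via the decomposition property of the semi-graphoid axiom together with $\E[\eta_t]=0$), one obtains $\E[\tilde y_t\mid l^s_t=l^s,l^u_t=l^u,k_t=k,r^{1,2}_t=r]=\beta_{l^s}(\phi(r))l^s+\beta_{l^u}(\phi(r))l^u+\beta_k(\phi(r))k+\beta_0(\phi(r))$, which is affine, hence differentiable, in $(l^s,l^u,k)$; the three partial derivatives evaluated at $(l^s_t,l^u_t,k_t,r^{1,2}_t)$ identify $\beta_{l^s}(\omega_t)$, $\beta_{l^u}(\omega_t)$, and $\beta_k(\omega_t)$. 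Subtracting their contributions from $\tilde y_t$ and taking $\E[\cdot\mid l^s_t,l^u_t,k_t,r^{1,2}_t]$ identifies $\beta_0(\omega_t)$ as in Lemma \ref{lemma:beta_0}. Collecting these facts gives the theorem.

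The main obstacle is the functional-independence step, i.e.\ making sense of the three partial derivatives above as identifying objects at a locality of identification in the sense of Definition \ref{def:locality_of_identification}$^{\prime}$. Here I would prove the analog of Proposition \ref{prop:functional_independence}: if $\underline f<f_{l^s_t,l^u_t,k_t,\omega_t}<\overline f$ on an $\varepsilon$-ball of $(l^{s\ast},l^{u\ast},k^\ast,\omega^\ast)$, then integrating out one coordinate bounds the relevant marginal away from zero, so each of $f_{l^s_t\mid l^u_t,k_t,\omega_t}$, $f_{l^u_t\mid l^s_t,k_t,\omega_t}$, and $f_{k_t\mid l^s_t,l^u_t,\omega_t}$ exists and is strictly positive on a neighborhood. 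This is the only place where passing from one to two labor inputs does real work: it is precisely what rules out functional — in particular collinear — dependence between $l^s_t$ and $l^u_t$ given $(k_t,\omega_t)$, which would otherwise obstruct separate identification of $\beta_{l^s}$ and $\beta_{l^u}$. The extension is routine, since the bound-above-and-below condition in Definition \ref{def:locality_of_identification}$^{\prime}$ is tailored to it, and the timing assumption (predetermined $l^s_t$, $l^u_t$, $k_t$ with stochastically evolving $\omega_t$) discussed after Lemma \ref{lemma:beta_l_beta_k} is what makes such localities non-vacuous.
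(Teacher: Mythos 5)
Your proposal is correct and takes essentially the same approach the paper intends: the paper states no separate proof of this theorem, asserting only that the identification results follow ``in similar lines of argument'' to the baseline, and your step-by-step adaptation of Lemmas \ref{lemma:r}--\ref{lemma:beta_0} with $l_t$ replaced by the pair $(l^s_t,l^u_t)$ is exactly that argument made explicit. Your added analog of Proposition \ref{prop:functional_independence} under Definition \ref{def:locality_of_identification}$'$ (integrating out one coordinate to get existence and positivity of each conditional density, hence non-degenerate variation in $l^s_t$, $l^u_t$, and $k_t$ separately) is the right way to make the three partial-derivative formulas meaningful, matching the role that proposition plays in the baseline model.
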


The coefficients of flexible inputs, $m^1$ and $m^2$, are identified in
closed form by 
\begin{align*}
\beta_{m^1}(\omega_{t}) &= s^1_{t} = \frac{ p_{t}^{m^1} \exp(m^1) }{ \E[
p_{t}^y \exp(y_{t}) | l^s_{t}, l^u_{t}, k_{t}, m^1_{t}, m^2_{t}, r^{1,2}_{t}
] } \qquad\text{and} \\
\beta_{m^2}(\omega_{t}) &= s^2_{t} = \frac{ p_{t}^{m^2} \exp(m^2) }{ \E[
p_{t}^y \exp(y_{t}) | l^s_{t}, l^u_{t}, k_{t}, m^1_{t}, m^2_{t}, r^{1,2}_{t}
] }
\end{align*}
The coefficients of skilled labor $l^s$, unskilled labor $l^u$, and capital $%
k$ are identified in turn in closed form by 
\begin{align*}
\beta_{l^s}(\omega_{t}) &= \left. \frac{\partial}{\partial l^s} E[ y_t -
\beta_{m^1}(\omega_{t}) m^1_{t} - \beta_{m^2}(\omega_{t}) m^2_{t} \ | \
l^s_{t} = l^s, l^u_{t}, k_{t}, r^{1,2}_{t}] \right\vert_{l^s=l^s_{t}} \\
\beta_{l^u}(\omega_{t}) &= \left. \frac{\partial}{\partial l^u} E[ y_t -
\beta_{m^1}(\omega_{t}) m^1_{t} - \beta_{m^2}(\omega_{t}) m^2_{t} \ | \
l^s_{t}, l^u_{t} = l, k_{t}, r^{1,2}_{t}] \right\vert_{l^u=l^u_{t}} \qquad%
\text{and} \\
\beta_{k}(\omega_{t}) &= \left. \frac{\partial}{\partial k} E[ y_t -
\beta_{m^1}(\omega_{t}) m^1_{t} - \beta_{m^2}(\omega_{t}) m^2_{t} \ | \
l^s_{t}, l^u_{t}, k_{t} = k, r^{1,2}_{t}] \right\vert_{k=k_{t}}.
\end{align*}
The additive productivity is identified in closed form by 
\begin{equation*}
\beta_{0}(\omega_{t}) = E[ y_t - \beta_{l^s}(\omega_{t}) l^s_{t} -
\beta_{l^u}(\omega_{t}) l^u_{t} - \beta_{k}(\omega_{t}) k_{t} -
\beta_{m^1}(\omega_{t}) m^1_{t} - \beta_{m^2}(\omega_{t}) m^2_{t} \ | \
l_{t}, k_{t}, r^{1,2}_{t}].
\end{equation*}


\subsection{More Flexible Input Variables}

\label{sec:more_flexible_input_variables}

The baseline model includes only two flexible inputs, $m^1_{t}$ and $m^2_{t}$%
. In applications, however, researchers often use more types of flexible
inputs, such as materials, electricity, and fuels \citep[e.g.,][]{LePe03}.
To accommodate such applications with three flexible inputs, for example, we
can augment the gross-output production function in the logarithm (\ref%
{eq:production_function}) as 
\begin{equation}  \label{eq:aug:aug:production_function}
y_{t} = \Psi\left( l_{t},k_{t},m^1_{t},m^2_{t},m^3_{t},\omega_{t} \right) +
\eta_{t} \qquad \E[\eta_t]=0,
\end{equation}
where $m^3_{t}$ is the logarithm of the third flexible input and all the
other variables are the same as in the baseline model. Accordingly, the
Cobb-Douglas form (\ref{eq:short_hand}) is augmented as 
\begin{equation}  \label{eq:aug:aug:short_hand}
\Psi\left(l_t,k,m^1_t,m^2_t,m^3_t,\omega_t\right) = \beta_{l}(\omega_t) l_t
+ \beta_{k}(\omega_t) k_t + \beta_{m^1}(\omega_t) m^1_t +
\beta_{m^2}(\omega_t) m^2_t + \beta_{m^3}(\omega_t) m^3_t +
\beta_{0}(\omega_t)
\end{equation}
with heterogeneous coefficients $(\beta_{l}(\omega_t),\beta_{k}(\omega_t),%
\beta_{m^1}(\omega_t),\beta_{m^2}(\omega_t),\beta_{m^3}(\omega_t),\beta_{0}(%
\omega_t))$. With the following modifications of Assumptions \ref%
{a:flexible_input}, \ref{a:finite}, \ref{a:measurable}, \ref{a:technology},
and \ref{a:independence} adapted to the current augmented model, we can
construct the identification results in similar lines of argument to those
we had for the baseline model.

\newtheorem*{assumption_aug_aug_flexible_input}{Assumption
\ref{a:flexible_input}$''$} 
\begin{assumption_aug_aug_flexible_input}[Flexible Input Choice]
A firm at time $t$ with the state variables $(l_{t},k_{t},\omega_{t})$ chooses the flexible input vector $(m^1_{t},m^2_{t},m^3_{t})$ by the optimization problem
\begin{align*}
\max_{(m^1,m^2,m^3) \in \mathbb{R}_+^3}
&p_{t}^y \exp\left(\Psi\left( l_{t},k_{t},m^1,m^2,m^3,\omega_{t} \right)\right)
E[\exp\left(\eta_{t}\right)]
\\
&-p_{t}^{m^1} \exp\left( m^1 \right)
-p_{t}^{m^2} \exp\left( m^2 \right)
-p_{t}^{m^3} \exp\left( m^3 \right),
\end{align*}
 where $p_{t}^{m^1} > 0$, $p_{t}^{m^2} > 0$, and $p_{t}^{m^3} > 0$ almost surely.
\end{assumption_aug_aug_flexible_input}

\newtheorem*{assumption_aug_aug_finite}{Assumption \ref{a:finite}$''$} 
\begin{assumption_aug_aug_finite}[Finite Solution]
$\beta_{m^1}(\omega_{t})+\beta_{m^2}(\omega_{t})+\beta_{m^3}(\omega_{t}) < 1$ almost surely.
\end{assumption_aug_aug_finite}

\newtheorem*{assumption_aug_aug_measurable}{Assumption
\ref{a:measurable}$''$} 
\begin{assumption_aug_aug_measurable}[Measurability]
$\beta_{l}(\cdot)$, $\beta_{k}(\cdot)$, $\beta_{m^1}(\cdot)$, $\beta_{m^2}(\cdot)$, $\beta_{m^3}(\cdot)$, and $\beta_{0}(\cdot)$ are measurable functions.
\end{assumption_aug_aug_measurable}

\newtheorem*{assumption_aug_aug_technology}{Assumption
\ref{a:technology}$''$} 
\begin{assumption_aug_aug_technology}[Non-Collinear Heterogeneity]
One of the functions, $\omega \mapsto \beta_{m^1}(\omega) / \beta_{m^2}(\omega)$, $\omega \mapsto \beta_{m^2}(\omega) / \beta_{m^3}(\omega)$, $\omega \mapsto \beta_{m^3}(\omega) / \beta_{m^1}(\omega)$, or $\omega \mapsto \left( \beta_{m^1}(\omega) / \beta_{m^2}(\omega), \beta_{m^2}(\omega) / \beta_{m^3}(\omega) \right)$, is measurable and invertible with measurable inverse.
\end{assumption_aug_aug_technology}

\newtheorem*{assumption_aug_aug_independence}{Assumption
\ref{a:independence}$''$} 
\begin{assumption_aug_aug_independence}[Independence]
$(l_{t},k_{t},\omega_{t},p_{t}^{m^1},p_{t}^{m^2},p_{t}^{m^3}) \ind \eta_{t}$.
\end{assumption_aug_aug_independence}

Assumption \ref{a:flexible_input}$^{\prime \prime }$ formally requires that $%
m_{t}^{1}$, $m_{t}^{2}$, and $m_{t}^{3}$ are the three flexible inputs while
the others are state variables. Assumption \ref{a:finite}$^{\prime \prime }$
requires diminishing returns with respect to the three flexible inputs, but
not necessarily with respect to all the production factors. Assumptions \ref%
{a:measurable}$^{\prime \prime }$ and \ref{a:independence}$^{\prime \prime }$
are straightforward modifications of Assumptions \ref{a:measurable} and \ref%
{a:independence}, respectively, suitable for the current model (\ref%
{eq:aug:aug:production_function})--(\ref{eq:aug:aug:short_hand}). Assumption %
\ref{a:technology}$^{\prime \prime }$ is a less straightforward modification
of Assumption \ref{a:technology}, and merits some discussions concerning its
implication for the latent technology. If the assumption holds for one of
the first three maps, namely $\omega \mapsto \beta _{m^{1}}(\omega )/\beta
_{m^{2}}(\omega )$, $\omega \mapsto \beta _{m^{2}}(\omega )/\beta
_{m^{3}}(\omega )$, or $\omega \mapsto \beta _{m^{3}}(\omega )/\beta
_{m^{1}}(\omega )$, then this assumption still requires the latent
technology $\omega _{t}$ to be one-dimensional, and the interpretation of
Assumption \ref{a:technology}$^{\prime \prime }$ is analogous to that of
Assumption \ref{a:technology}. On the other hand, if the assumption holds
for the last map, namely $\omega \mapsto \left( \beta _{m^{1}}(\omega
)/\beta _{m^{2}}(\omega ),\beta _{m^{2}}(\omega )/\beta _{m^{3}}(\omega
)\right) $, then this assumption requires the latent technology $\omega _{t}$
to be two-dimensional. In this case, similar lines of arguments to those for
the baseline model yield the vector of input cost ratios $\left(
r_{t}^{1,2},r_{t}^{2,3}\right) =\left( \beta _{m^{1}}(\omega )/\beta
_{m^{2}}(\omega ),\beta _{m^{2}}(\omega )/\beta _{m^{3}}(\omega )\right) $
as a control variable for the two-dimensional technology $\omega _{t}$,
where $r_{t}^{2,3}$ is defined by $r_{t}^{2,3}=p_{t}^{m^{2}}\exp
(m_{t}^{2})/p_{t}^{m^{3}}\exp (m_{t}^{3})$ analogously to (\ref{eq:ratio}),
and the identifying formulas will thus entail controlling for these two
ratios. We state as a theorem the extended identification result based on
these modified assumptions.

\begin{theorem}[Identification]
Suppose that Assumptions \ref{a:flexible_input}$^{\prime \prime }$, \ref%
{a:finite}$^{\prime \prime }$, \ref{a:measurable}$^{\prime \prime }$, \ref%
{a:technology}$^{\prime \prime }$, and \ref{a:independence}$^{\prime \prime }
$ are satisfied for the model (\ref{eq:aug:aug:production_function})--(\ref%
{eq:aug:aug:short_hand}). The parameter vector $(\beta_{l}(\omega_{t}),%
\beta_{k}(\omega_{t}),\beta_{m^1}(\omega_{t}),\beta_{m^2}(\omega_{t}),%
\beta_{m^3}(\omega_{t}),\beta_{0}(\omega_{t}))$ with the latent productivity 
$\omega_t$ is identified if there exists $(l^s_t,l^u_t,k_t)$ such that $%
(l^s_t,l^u_t,k_t,\omega_t)$ is at a locality of identification (Definition %
\ref{def:locality_of_identification}).
\end{theorem}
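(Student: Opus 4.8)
The plan is to re-run the baseline identification argument (Proposition \ref{prop:functional_independence} and Lemmas \ref{lemma:r}, \ref{lemma:control_variable}, \ref{lemma:beta_m}, \ref{lemma:beta_l_beta_k}, \ref{lemma:beta_0}) almost verbatim, carrying the extra flexible input $m^3_t$ along through every conditioning set, so that the only genuinely new content is the construction of the control variable under Assumption \ref{a:technology}$''$. First, Assumption \ref{a:flexible_input}$''$ delivers the first-order conditions $p_t^y \beta_{m^\iota}(\omega_t)\exp(\Psi(l_t,k_t,m^1_t,m^2_t,m^3_t,\omega_t))\E[\exp(\eta_t)] = p_t^{m^\iota}\exp(m^\iota)$ for $\iota\in\{1,2,3\}$; taking pairwise ratios yields $\beta_{m^1}(\omega_t)/\beta_{m^2}(\omega_t) = r^{1,2}_t$ and $\beta_{m^2}(\omega_t)/\beta_{m^3}(\omega_t) = r^{2,3}_t$, with $\beta_{m^3}(\omega_t)/\beta_{m^1}(\omega_t)$ equal to the reciprocal of their product, the analog of Lemma \ref{lemma:r}. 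In particular the three ratios are deterministically linked, so fixing any one of them — or the pair $(r^{1,2}_t,r^{2,3}_t)$ — pins down the others.

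Second, I would invoke Assumption \ref{a:technology}$''$: whichever of its four maps is measurable and invertible with measurable inverse, composing that inverse with the corresponding identity from the previous step expresses $\omega_t$ as a measurable function of a control variable $R_t$, where $R_t$ equals $r^{1,2}_t$ (or $r^{2,3}_t$, or $r^{3,1}_t$, after relabeling) in the three one-dimensional sub-cases and $R_t = (r^{1,2}_t,r^{2,3}_t)$ in the two-dimensional sub-case; this is the analog of Lemma \ref{lemma:control_variable}, giving $\omega_t = \phi(R_t)$ with $\phi,\phi^{-1}$ measurable. From this one obtains the two sigma-algebra identities $\sigma(l_t,k_t,m^1_t,m^2_t,m^3_t,R_t) = \sigma(l_t,k_t,m^1_t,m^2_t,m^3_t,\omega_t)$ and $\sigma(l_t,k_t,R_t) = \sigma(l_t,k_t,\omega_t)$, which are all that the remaining steps use.

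Third, the proof of Lemma \ref{lemma:beta_m} carries over: Assumption \ref{a:finite}$''$ guarantees an interior flexible-input solution that, by Assumption \ref{a:measurable}$''$, is a measurable function of $(l_t,k_t,\omega_t,p_t^{m^1},p_t^{m^2},p_t^{m^3})$; Assumption \ref{a:independence}$''$ together with Theorem 2.1.6 of \citet{Durrett2010} then gives $(l_t,k_t,m^1_t,m^2_t,m^3_t,\omega_t)\ind\eta_t$, whence $\E[\exp(y_t)\mid l_t,k_t,m^1_t,m^2_t,m^3_t,R_t] = \exp(\Psi(l_t,k_t,m^1_t,m^2_t,m^3_t,\omega_t))\E[\exp(\eta_t)]$, and dividing each first-order condition by this identity yields $\beta_{m^\iota}(\omega_t) = s^\iota_t$ for $\iota\in\{1,2,3\}$, with the conditioning set inside each \emph{ex ante} cost share now including $m^3_t$ and $R_t$. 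Forming $\tilde y_t := y_t - s^1_t m^1_t - s^2_t m^2_t - s^3_t m^3_t = \beta_l(\omega_t) l_t + \beta_k(\omega_t) k_t + \beta_0(\omega_t) + \eta_t$ and using $\sigma(l_t,k_t,R_t) = \sigma(l_t,k_t,\omega_t)$ together with $\E[\eta_t\mid l_t,k_t,\omega_t] = 0$ (from Assumption \ref{a:independence}$''$ via the decomposition axiom and $\E[\eta_t]=0$), the conditional mean $\E[\tilde y_t\mid l_t=l,k_t=k,R_t=R] = \beta_l(\phi(R))l + \beta_k(\phi(R))k + \beta_0(\phi(R))$ is affine in $(l,k)$; differentiating in $l$ and in $k$ at the realized point identifies $\beta_l(\omega_t)$ and $\beta_k(\omega_t)$ exactly as in Lemma \ref{lemma:beta_l_beta_k}, the requisite existence of these conditional means and their partials being supplied by the obvious analog of Proposition \ref{prop:functional_independence} at a locality of identification for $(l_t,k_t,\omega_t)$ in the sense of Definition \ref{def:locality_of_identification}. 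Subtracting the implied linear terms and taking $\E[\cdot\mid l_t,k_t,R_t]$ then identifies $\beta_0(\omega_t)$ as in Lemma \ref{lemma:beta_0}, which completes the argument.

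The only step requiring care is the second one, and within it the two-dimensional sub-case of Assumption \ref{a:technology}$''$: there one must confirm that $(r^{1,2}_t,r^{2,3}_t)$ is genuinely a measurable bijective image of the (now two-dimensional) $\omega_t$, so that $\phi^{-1}$ is well defined and the two sigma-algebra identities continue to hold with $R_t$ bivariate; in the one-dimensional sub-cases one must additionally record the deterministic links among $r^{1,2}_t,r^{2,3}_t,r^{3,1}_t$ so that conditioning on the single chosen ratio is equivalent to conditioning on $\omega_t$. Everything else is a mechanical transcription of the baseline proofs with $m^3_t$ carried through the conditioning sets.
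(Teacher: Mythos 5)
Your proposal is correct and follows essentially the same route the paper intends: the paper states this extension without a separate proof, noting only that the baseline arguments (Lemmas \ref{lemma:r}--\ref{lemma:beta_0} and Proposition \ref{prop:functional_independence}) carry over with the vector $(r^{1,2}_{t},r^{2,3}_{t})$ serving as the control variable under Assumption \ref{a:technology}$''$, which is exactly what you execute, including the correct treatment of the one-dimensional versus two-dimensional sub-cases and the redundancy of conditioning on deterministically linked ratios. No gaps to report.
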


The coefficients of flexible inputs, $m^1$ and $m^2$, are identified in
closed form by 
\begin{align*}
\beta_{m^1}(\omega_{t}) &= s^1_{t} = \frac{ p_{t}^{m^1} \exp(m^1) }{ \E[
p_{t}^y \exp(y_{t}) | l_{t}, k_{t}, m^1_{t}, m^2_{t}, m^3_{t}, r^{1,2}_{t},
r^{2,3}_{t} ] }, \\
\beta_{m^2}(\omega_{t}) &= s^2_{t} = \frac{ p_{t}^{m^2} \exp(m^2) }{ \E[
p_{t}^y \exp(y_{t}) | l_{t}, k_{t}, m^1_{t}, m^2_{t}, m^3_{t}, r^{1,2}_{t},
r^{2,3}_{t} ] }, \qquad\text{and} \\
\beta_{m^3}(\omega_{t}) &= s^3_{t} = \frac{ p_{t}^{m^3} \exp(m^3) }{ \E[
p_{t}^y \exp(y_{t}) | l_{t}, k_{t}, m^1_{t}, m^2_{t}, m^3_{t}, r^{1,2}_{t},
r^{2,3}_{t} ] }
\end{align*}
The coefficients of labor $l$ and capital $k$ are identified in turn in
closed form by 
\begin{align*}
\beta_{l}(\omega_{t}) &= \left. \frac{\partial}{\partial l} E[ y_t -
\beta_{m^1}(\omega_{t}) m^1_{t} - \beta_{m^2}(\omega_{t}) m^2_{t} -
\beta_{m^3}(\omega_{t}) m^3_{t} \ | \ l_{t} = l, k_{t}, r^{1,2}_{t},
r^{2,3}_{t}] \right\vert_{l=l_{t}} \qquad\text{and} \\
\beta_{k}(\omega_{t}) &= \left. \frac{\partial}{\partial k} E[ y_t -
\beta_{m^1}(\omega_{t}) m^1_{t} - \beta_{m^2}(\omega_{t}) m^2_{t} -
\beta_{m^3}(\omega_{t}) m^3_{t} \ | \ l_{t}, k_{t} = k, r^{1,2}_{t},
r^{2,3}_{t}] \right\vert_{k=k_{t}}.
\end{align*}
The additive productivity is identified in closed form by 
\begin{equation*}
\beta_{0}(\omega_{t}) = E[ y_t - \beta_{l}(\omega_{t}) l_{t} -
\beta_{k}(\omega_{t}) k_{t} - \beta_{m^1}(\omega_{t}) m^1_{t} -
\beta_{m^2}(\omega_{t}) m^2_{t} - \beta_{m^3}(\omega_{t}) m^3_{t} \ | \
l_{t}, k_{t}, r^{1,2}_{t}, r^{2,3}_{t}].
\end{equation*}


\subsection{Single $m_{t}$ Variable}

\label{sec:single_m_variable}

The baseline model includes two flexible inputs, $m^1_{t}$ and $m^2_{t}$.
Researchers sometimes include only one type of inputs, such as materials,
other than labor and capital. We may accommodate such a reduced model at the
cost of an alternative timing assumption for labor input, namely concurrent
choice of labor input. Write a parsimonious version of the gross-output
production function in the logarithm (\ref{eq:production_function}) as 
\begin{equation}  \label{eq:red:production_function}
y_{t} = \Psi\left( l_{t},k_{t},m_{t},\omega_{t} \right) + \eta_{t} \qquad %
\E[\eta_t]=0,
\end{equation}
Accordingly, the Cobb-Douglas form (\ref{eq:short_hand}) is reduced as 
\begin{equation}  \label{eq:red:short_hand}
\Psi\left(l_t,k,m_t,\omega_t\right) = \beta_{l}(\omega_t) l_t +
\beta_{k}(\omega_t) k_t + \beta_{m}(\omega_t) m_t + \beta_{0}(\omega_t)
\end{equation}
with heterogeneous coefficients $(\beta_{l}(\omega_t),\beta_{k}(\omega_t),%
\beta_{m}(\omega_t),\beta_{0}(\omega_t))$. With the following modifications
of Assumptions \ref{a:flexible_input}, \ref{a:finite}, \ref{a:measurable}, %
\ref{a:technology}, and \ref{a:independence} adapted to the current
augmented model, we can construct the identification results in similar
lines of argument to those we had for the baseline model.

\newtheorem*{assumption_red_flexible_input}{Assumption
\ref{a:flexible_input}$'''$} 
\begin{assumption_red_flexible_input}[Flexible Input Choice]
A firm at time $t$ with the state variables $(k_{t},\omega_{t})$ chooses the flexible input vector $(l_{t},m_{t})$ by the optimization problem
\begin{align*}
\max_{(l,m) \in \mathbb{R}_+^2}
&p_{t}^y \exp\left(\Psi\left( l,k_{t},m,\omega_{t} \right)\right)
E[\exp\left(\eta_{t}\right)]
\\
&-p_{t}^{l} \exp\left( l \right)
-p_{t}^{m} \exp\left( m \right),
\end{align*}
 where $p_{t}^{l} > 0$ and $p_{t}^{m} > 0$ almost surely.
\end{assumption_red_flexible_input}

\newtheorem*{assumption_red_finite}{Assumption \ref{a:finite}$'''$} 
\begin{assumption_red_finite}[Finite Solution]
$\beta_{l}(\omega_{t})+\beta_{m}(\omega_{t}) < 1$ almost surely.
\end{assumption_red_finite}

\newtheorem*{assumption_red_measurable}{Assumption \ref{a:measurable}$'''$} 
\begin{assumption_red_measurable}[Measurability]
$\beta_{l}(\cdot)$, $\beta_{k}(\cdot)$, $\beta_{m}(\cdot)$, and $\beta_{0}(\cdot)$ are measurable functions.
\end{assumption_red_measurable}

\newtheorem*{assumption_red_technology}{Assumption \ref{a:technology}$'''$} 
\begin{assumption_red_technology}[Non-Collinear Heterogeneity]
The function $\omega \mapsto \beta_{l}(\omega) / \beta_{m}(\omega)$ is measurable and invertible with measurable inverse.
\end{assumption_red_technology}

\newtheorem*{assumption_red_independence}{Assumption
\ref{a:independence}$'''$} 
\begin{assumption_red_independence}[Independence]
$(l_{t},k_{t},\omega_{t},p_{t}^{l},p_{t}^{m}) \ind \eta_{t}$.
\end{assumption_red_independence}

Assumption \ref{a:flexible_input}$^{\prime \prime \prime }$ asserts that,
unlike the baseline model, the labor input $l_{t}$ is treated as a flexible
input along with the materials $m_{t}$, as opposed to a predetermined
quantity. This will \textit{not} incur the functional independence problem %
\citep{AcCaFr15} because the output elasticity with respect to labor in this
case is unambiguously identified via the first-order condition just like the
output elasticity with respect to materials. Accordingly, the locality of
identification defined for the current model below does \textit{not} require
data variations in $l_{t}$ given the state variables fixed. The
identification argument in the current reduced model relies on the
non-collinear heterogeneity in the ratio of the elasticity with respect to
labor to the elasticity with respect to materials. With the following
modification to the definition of the locality of identification adapted to
the current augmented model, we state as a theorem the identification result.

\newtheorem*{definition_red_locality_of_identification}{Definition
\ref{def:locality_of_identification}$'''$} 
\begin{definition_red_locality_of_identification}[Locality of Identification]
The point $(k_{t},\omega_{t}) = (k^\ast,\omega^\ast)$ is called the locality of identification if there are real numbers $\underline{f}, \overline{f} \in (0,\infty)$ such that $\underline{f} < f_{k_{t},\omega_{t}}(k,\omega) < \overline{f}$ for all $(k,\omega)$ in a neighborhood of $(k^\ast,\omega^\ast)$.
\end{definition_red_locality_of_identification}

\begin{theorem}[Identification]
Suppose that Assumptions \ref{a:flexible_input}$^{\prime \prime \prime }$, %
\ref{a:finite}$^{\prime \prime \prime }$, \ref{a:measurable}$^{\prime \prime
\prime }$, \ref{a:technology}$^{\prime \prime \prime }$, and \ref%
{a:independence}$^{\prime \prime \prime }$ are satisfied for the model (\ref%
{eq:red:production_function})--(\ref{eq:red:short_hand}). The parameter
vector $(\beta_{l}(\omega_{t}),\beta_{k}(\omega_{t}),\beta_{m}(\omega_{t}),%
\beta_{0}(\omega_{t}))$ with the latent productivity $\omega_t$ is
identified if there exists $k_t$ such that $(k_t,\omega_t)$ is at a locality
of identification (Definition \ref{def:locality_of_identification}$^{\prime
\prime \prime }$).
\end{theorem}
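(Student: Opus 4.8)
The plan is to transcribe the baseline argument of Lemmas~\ref{lemma:r}--\ref{lemma:beta_0} and Proposition~\ref{prop:functional_independence}, now with the pair of flexible inputs $(m^1_t,m^2_t)$ replaced by $(l_t,m_t)$ and with $k_t$ as the sole observed state variable. First I would write the first-order conditions implied by Assumption~\ref{a:flexible_input}$'''$, namely $p^y_t\beta_l(\omega_t)\exp(\Psi(l_t,k_t,m_t,\omega_t))\,\E[\exp(\eta_t)]=p^l_t\exp(l_t)$ together with the analogous identity for $m_t$, and take their ratio to obtain $\beta_l(\omega_t)/\beta_m(\omega_t)=r_t$, where $r_t:=p^l_t\exp(l_t)/(p^m_t\exp(m_t))$ is the labor-to-materials cost ratio (the analog of Lemma~\ref{lemma:r}, observable from input costs alone). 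Assumption~\ref{a:technology}$'''$ then supplies a measurable invertible $\phi$ with measurable inverse such that $\omega_t=\phi(r_t)$, exactly as in Lemma~\ref{lemma:control_variable}; since $r_t$ is a deterministic function of $\omega_t$ and $\phi$ is invertible, this yields the sigma-algebra equalities $\sigma(l_t,k_t,m_t,r_t)=\sigma(l_t,k_t,m_t,\omega_t)$ and $\sigma(k_t,r_t)=\sigma(k_t,\omega_t)$.

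Next I would solve the static problem. Assumption~\ref{a:finite}$'''$ (that $\beta_l(\omega_t)+\beta_m(\omega_t)<1$) makes the optimization interior with a unique solution, and the explicit solution formulas --- the obvious analog of~(\ref{eq:m1})--(\ref{eq:m2}) with $l$ in place of $m^1$ --- show, under Assumption~\ref{a:measurable}$'''$, that $(l_t,m_t)$ is a measurable function of $(k_t,\omega_t,p^l_t,p^m_t)$. Assumption~\ref{a:independence}$'''$ and Theorem~2.1.6 of~\citet{Durrett2010} then give $(l_t,k_t,m_t,\omega_t)\ind\eta_t$, and repeating the chain of equalities in the proof of Lemma~\ref{lemma:beta_m} yields $\E[\exp(y_t)\,|\,l_t,k_t,m_t,r_t]=\exp(\Psi(l_t,k_t,m_t,\omega_t))\,\E[\exp(\eta_t)]$. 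Dividing each first-order condition by this identity gives $\beta_l(\omega_t)=s^l_t$ and $\beta_m(\omega_t)=s^m_t$, where $s^l_t$ and $s^m_t$ are the \textit{ex ante} cost shares relative to $\E[p^y_t\exp(y_t)\,|\,l_t,k_t,m_t,r_t]$; these two identifications require no locality hypothesis, because $l_t$ is a flexible input whose elasticity is pinned down directly by its own first-order condition. Setting $\tilde y_t:=y_t-s^l_t l_t-s^m_t m_t=\beta_k(\omega_t)k_t+\beta_0(\omega_t)+\eta_t$, the decomposition property of the semi-graphoid axiom together with $\E[\eta_t]=0$ gives $\E[\eta_t\,|\,k_t,\omega_t]=0$, hence $\E[\tilde y_t\,|\,k_t=k,r_t=r]=\beta_k(\phi(r))\,k+\beta_0(\phi(r))$ is affine in $k$; differentiating in $k$ at $k=k_t$ recovers $\beta_k(\omega_t)=\beta_k(\phi(r_t))$, and then $\tilde{\tilde y}_t:=\tilde y_t-\beta_k(\omega_t)k_t$ satisfies $\E[\tilde{\tilde y}_t\,|\,k_t,r_t]=\beta_0(\omega_t)$ --- the analogs of Lemmas~\ref{lemma:beta_l_beta_k} and~\ref{lemma:beta_0}.

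The only delicate point, and the sole place the hypothesis of the theorem is used, is justifying that $\partial_k\,\E[\tilde y_t\,|\,k_t=k,r_t]\big|_{k=k_t}$ is a genuine identifying object --- that is, that $k_t$ retains variation once $\omega_t$ (equivalently $r_t$) is held fixed. This I would handle by the obvious analog of Proposition~\ref{prop:functional_independence}: Definition~\ref{def:locality_of_identification}$'''$ bounds $f_{k_t,\omega_t}$ between $\underline{f}$ and $\overline{f}$ on a neighborhood of $(k^\ast,\omega^\ast)$, so $f_{k_t|\omega_t}(\,\cdot\,|\,\omega^\ast)$ exists and is positive near $k=k^\ast$, and since $\sigma(k_t,r_t)=\sigma(k_t,\omega_t)$ the conditional expectation is a bona fide function of $k$ on that neighborhood, making the derivative at $k_t$ well defined. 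Assembling the steps at $\omega_t=\omega^\ast$ for any $k_t$ with $(k_t,\omega^\ast)$ a locality of identification then delivers identification of $(\beta_l(\omega^\ast),\beta_k(\omega^\ast),\beta_m(\omega^\ast),\beta_0(\omega^\ast))$. I expect this functional-independence bookkeeping for $k_t$ to be the main obstacle; the first-order-condition and control-variable steps are essentially mechanical transcriptions of the baseline proofs.
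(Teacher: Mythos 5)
Your proposal is correct and is essentially the paper's intended argument: the paper gives no separate proof for this theorem, stating only that the identification follows ``similar lines of argument'' to the baseline, and your transcription—first-order-condition ratio giving $r^{l,m}_t=\beta_l(\omega_t)/\beta_m(\omega_t)$, the control variable $\phi$, cost-share identification of $\beta_l$ and $\beta_m$, derivative-in-$k$ identification of $\beta_k$, residual identification of $\beta_0$, and the locality-of-identification analog of Proposition \ref{prop:functional_independence} needed only for $k_t$—is exactly that argument. You also correctly isolate the one substantive difference from the baseline, namely that no functional-independence condition on $l_t$ is needed since its elasticity is pinned down by its own first-order condition, which matches the paper's discussion following Assumption \ref{a:technology}$'''$.
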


With the flexible input cost ratio modified as 
\begin{equation*}
r^{l,m}_{t} = \frac{p_{t}^{l} \exp(l_{t})}{p_{t}^{m} \exp(m_{t})}.
\end{equation*}
the coefficients of flexible inputs, $l$ and $m$, are identified in closed
form by 
\begin{align*}
\beta_{l}(\omega_{t}) &= s^l_{t} = \frac{ p_{t}^{l} \exp(l) }{ \E[ p_{t}^y
\exp(y_{t}) | l_{t}, k_{t}, m_{t}, r^{l,m}_{t} ] } \qquad\text{and} \\
\beta_{m}(\omega_{t}) &= s^m_{t} = \frac{ p_{t}^{m} \exp(m) }{ \E[ p_{t}^y
\exp(y_{t}) | l_{t}, k_{t}, m_{t}, r^{l,m}_{t} ] }
\end{align*}
The coefficient capital $k$ is identified in turn in closed form by 
\begin{align*}
\beta_{k}(\omega_{t}) &= \left. \frac{\partial}{\partial k} E[ y_t -
\beta_{l}(\omega_{t}) l_{t} - \beta_{m}(\omega_{t}) m_{t} \ | \ k_{t} = k,
r^{l,m}_{t}] \right\vert_{k=k_{t}}.
\end{align*}
The additive productivity is identified in closed form by 
\begin{equation*}
\beta_{0}(\omega_{t}) = E[ y_t - \beta_{l}(\omega_{t}) l_{t} -
\beta_{k}(\omega_{t}) k_{t} - \beta_{m}(\omega_{t}) m_{t} \ | \ k_{t},
r^{l,m}_{t}].
\end{equation*}


\section{Summary and Discussions}\label{sec:summary}

\label{sec:summary} In this paper, we develop the identification of
heterogeneous elasticities in the Cobb-Douglas production function. The
identification is constructively achieved with closed-form formulas for the
output elasticity with respect to each input, as well as the additive
productivity, for each firm. The flexible input cost ratio plays the role of
a control function under the assumption of non-collinear heterogeneity
between elasticities with respect to two flexible inputs. The \textit{ex ante%
} flexible input cost share is shown to be useful to identify the
elasticities with respect to flexible inputs for each firm. The elasticities
with respect to labor and capital can be identified for each firm under the
timing assumption admitting the functional independence. Extended
identification results are provided for three alternative models that are
frequently used in empirical analysis.

In light of the fact that conventional identification strategies for production functions use panel data, it is unusual for our identification strategy not to rely on panel data.
Note that the existing papers use panel data to form orthogonality restrictions to estimate input coefficients.
Our explicit identification of the flexible input coefficients via the first-order conditions does not involve any panel structure.
This feature entails a couple of limitations which are the costs that we pay for not relying on panel data and for our ability to identify heterogeneous elasticities.
While these limitations are shared by other recent papers that also use the first-order conditions for identification, we discuss them below and propose a scope of future research.

The first limitation of our identification method is the requirement for input and output prices, which are not always available in empirical data.
In certain types of production analysis, however, the prices are normalized to one and thus are assumed to be known.
For example, many important papers, including \cite{LePe03}, in estimation of production functions use a data set that is based on the census for plants collected by Chile's Instituto Nacional de Estadistica \cite[cf.][]{Lu91}.
For this data set, the output ($\exp(y_{t})$) is the gross revenue deflated to real Chilean pesos in a baseline year.
Flexible inputs include measures of materials ($\exp(m^1_{t})$), electricity ($\exp(m^2_{t})$), and fuels ($\exp(m^3_{t})$) all measured in terms of pecuniary values deflated to real Chilean pesos in a baseline year.
Since they are measured in terms of values as opposed to quantities, the researchers effectively set $p_t^y=p_t^{m^1}=p_t^{m^2}=p_t^{m^3}=1$ in their analysis.
For this data set, therefore, our first limitation is not binding.

The second limitation of our identification approach is the assumption of price-taking firms in both the output market and the flexible input markets.
This assumption rules out market power that is relevant to answering some important policy questions in industry studies and international trade \citep[e.g.,][]{DeWa12,DeGoKhPa16}.
We leave identification strategies under market power for future research.

\newpage

\end{document}